\documentclass[12pt,a4paper]{amsart}

\usepackage[latin1]{inputenc}     % ��KK�SET K�YTT��N
\usepackage{amsmath,amssymb,amsthm}
\usepackage{latexsym}
\usepackage{amsfonts}
\usepackage{bbm,bm,dsfont} % for bold greek symbols, write \bm{\sigma} for example
\usepackage{color} 
\usepackage{graphicx}

\numberwithin{equation}{section} %mit� liene t�m� tarkoittaa?

\theoremstyle{definition}
\newtheorem{proposition}{Proposition}
\newtheorem{definition}{Definition}
\newtheorem{example}{Example}
\newtheorem{remark}{Remark}
\newtheorem{theorem}{Theorem}

%\newcommand{\be}{\begin{align*}} % tee t�ll� kaikki kaavat!!!!
%\newcommand{\en}{\end{align*} }

%\newcommand{\be}{\begin{equation}}
%\newcommand{\ee}{\end{equation}}

% OTA SEURAAVAT SIVUASTUKSET POIS MIK�LI PHYS. REV. JUTTU
\setlength{\textwidth}{17cm}
\addtolength{\textheight}{2.0cm}
\addtolength{\voffset}{-2.4cm} 
\addtolength{\hoffset}{-2.0cm}

 % RIVIV�LI 1.5

%\usepackage{showkeys} % VIITTEET N�KYY NYT

%kaunokirjamet aina \h eteen tai \cal komento

\newcommand{\hA}{\mathcal{A}}

\newcommand{\hF}{\mathcal{F}}

 %generic index set I

 %ordo, the set of observables
 %projections

\newcommand{\hU}{\mathcal{U}}

%sf kirjaimet (esim. POM) aina \sf eteen

%numbers eli bb tyyli
 %natural
\newcommand{\R}{\mathbb R} %real
 %integer
\newcommand{\C}{\mathbb C} %complex
 %rational
 %circle

%vectors, bold facet, joko \b tai \v �rsytt�v��!

\newcommand{\bS}{\mathbf{S}}
\newcommand{\bM}{\mathbf{M}}

\newcommand{\bL}{\mathbf{L}}

 %sigma
 %a
 %b
 %c
 %e
 %f
 %g
 %u
 %r
 %n
 %m
 %k
 %t
 %x
 %y
 %v
 %w

%kreikkalainen meininki
 % hassu fii
 % -''-
 % pieni omega
 % iso omega
 % pieni lambda
 % hassu epsilon
 % epsilon
 % pieni sigma

%Hilbert space
\newcommand{\hi}{\mathcal{H}} %Hilbert space
 % -''- 
\newcommand{\ki}{\mathcal{K}} %other Hilbert space
 %3rd Hilbert space
 %4th Hilbert space

%operators 
 %Fourier transform
\newcommand{\id}{\mathds1} %identity operator, MIK� OIS HYV�? PAKSU YKK�NEN? {\openone} ei toimi nyt!
%null operator

%operator classes
\newcommand{\lh}{\mathcal{L(H)}} %bounded linear operators on H
 %bounded linear operators on K
 %bounded operators H -> K
 %bounded operators K -> H
\renewcommand{\th}{\mathcal{T(H)}} %trace class operators on H
 %trace class operators on K
\newcommand{\sh}{\mathcal{S(H)}} %states on H
 %states on K
 %states = trace-1-positive
\newcommand{\eh}{\mathcal{E(H)}} %effects
 %unitary operators
 %projections
%1-dim projections 
%finite rank operators
%HilpertSmitti
%self-adjoint operators
%self-adjoint & positive operators
%self-adjoint trace-class operators

%trace, ket, ketbra, innerproduct, norm
\newcommand{\tr}[1]{\mathrm{tr}\left[#1\right]} %trace
\def\<{\langle} %mainiot sulut! <
\def\>{\rangle} %mainiot sulut! >
\newcommand{\ket}[1]{|#1\rangle} %ket
 %bra
\newcommand{\kb}[2]{|#1 \rangle\langle #2|} %ketbra
 % sis�tulo (huom! V�lipalkki | j�� pieneksi)
 % normi

%observables
 %the set of covariant observables
 %canonical observable
 %canonical observable
\newcommand{\Ao}{\mathsf{A}} %generic observable = \Eo
\newcommand{\Bo}{\mathsf{B}} 
\newcommand{\Co}{\mathsf{C}} 
\newcommand{\Eo}{\mathsf{E}} %generic observable = \sfe
\newcommand{\Fo}{\mathsf{F}} %generic observable = \sff
\newcommand{\Go}{\mathsf{G}} 
\newcommand{\Qo}{\mathsf{Q}} %generic observable = \sfq
\newcommand{\Po}{\mathsf{P}} %generic observable = \sfp
 %generic observable = \sfm
\newcommand{\Zo}{\mathsf{Z}}

%other
 %joukko, esim. {x | x+1=3}
 %yleinen Borel sigma-algebra
\newcommand{\br}{\mathcal B(\mathbb R)} %reaalisuoran Borel sigma-algebra
 %tason Borel sigma-algebra
 %linear combinations
 %extremals
 %variance
 %differentiaali
 %osittaisdifferentiaali, doo
 %ylleviivaus
 %alleviivaus
 % karakteristinen funktio, esim. \CHI X

%VARIT:

\begin{document}

\title[Joint measurability in Lindbladian open quantum systems]{Joint measurability in Lindbladian open quantum systems} 

\author{Jukka Kiukas}
\address{Aberystwyth University, Aberystwyth SY23 3BZ, United Kingdom}
\email{jek20@aber.ac.uk}
\author{Pekka Lahti}
\address{University of Turku, Turku, Finland}
\email{pekka.lahti@utu.fi}
\author{Juha-Pekka Pellonp\"a\"a}
\email{juhpello@utu.fi}
\address{Department of Physics and Astronomy, University of Turku, FI-20014 Turku, Finland}

\begin{abstract} 
We study joint measurability of quantum observables in open systems governed by a master equation of Lindblad form. We briefly review the historical perspective of open systems and conceptual aspects of quantum measurements, focusing subsequently on describing emergent classicality under quantum decoherence. While decoherence in quantum states has been studied extensively in the past, the measurement side is much less understood --- here we present and extend some recent results on this topic.

%\noindent
%PACS numbers: 03.65.Ta, 03.67.--a
\end{abstract}

\maketitle

%%%%%%%%%%%%%%%%%%%%%%%%%%%%%%%%%%%%%%%%%%%%%%%%%%%%%%%%%%%

\section{Introduction}

Incompatibility, the lack of joint measurability, is a  key feature of quantum mechanics.
It appeared as the noncommutativity of  (the mathematical expressions of) the basic physical quantities of a quantum system 
already in the very first papers of Heisenberg \cite{Heisenberg1925} and Schr\"odinger \cite{Schrodinger1926} on quantum mechanics,
with an intuitive understanding that such quantities cannot be measured  jointly unless a sufficient amount of mutual inaccuracies are allowed in their measurements, inaccuracies expressed in terms of the  uncertainty relations \cite{Heisenberg1927}.
The first part of this, the noncommutativity and its relation to the lack of joint measurability, was clarified and made rigorous in the work of von Neumann \cite{vN1932}.
The clarification of the second part of this, the possibility for an approximate joint measurability, had to wait for an extension of the mathematical representation of the notion of a physical quantity, observable, as a selfadjoint operator, real spectral measure, to a normalized positive operator measure, semispectral measure, 
as well as for an appropriate notion of an approximate measurement in quantum mechanics. The identification of quantum observables as normalized positive operator measures
 emerged naturally  in an operationally motivated axiomatic reconstructions/reformulations of quantum mechanics initiated notably by Ludwig \cite{Ludwigs}, Davies and Lewis \cite{DaviesLewis1970}, as well as Ingarden \cite{Ingarden}.
This extension made it also clear that a quantum mechanically meaningful notion of an approximate measurement of an observable is to be based on a comparison between the statistics of outcomes of two measurements (observables), the actually performed one and the targeted one, 
an idea clearly expressed by Ludwig \cite[pp.\ 197-8]{Ludwig1983} and worked out, 
for instance, \cite{Werner2004,BLW2014}.

Interaction of a system with another system is the source of information  but also influence (designed or not) on the system. In addition to abrupt measurements, evolution of an open quantum system, the reduced dynamics, is an instance  of such a possible influence. When an observable is measured on a dynamical quantum system, the effect of the evolution taking place between a fixed initial time and the time of the measurement can be incorporated into the observable in the Heisenberg picture, allowing for initially incompatible observables to become jointly measurable at a later time.

Apart from some simple cases, the reduced dynamics obtained from the unitary evolution of the total system, the system $\bS$ and its environment, though uniquely defined, is, however, typically quite involved and does not  easily lend itself to any practical use, in particular, ``as [it] depend[s] on the chosen initial state of [its environment]  at a particular instant in the past.  Only with some additional assumptions will there be an autonomous time-homogeneous evolution in $\bS$. One necessary condition is clearly that the state of [the environment] does not change significantly as a result of the interaction with  $\bS$.''  \cite{Lindblad1983}. 
An important instance of such subsystem dynamics is given by a uniformly continuous semigroup of dynamical maps 
in which case the state of $\bS$  evolves according to the  equation 
derived by G\"oran Lindblad in 1976 
\cite{Lindblad1976} and since then generally known as the Lindblad or the  Gorini, Kossakowski, Lindblad, Sudarshan  (GKLS) equation, due to the parallel independent work of Gorini, Kossakowski,  and Sudarshan \cite{GKS76}.
For a study of the chain of events, intuitions and ideas that led to the formulation of these equations, we refer to \cite{GKLS}.

The semigroup dynamics warrants that if some initially incompatible observables turn jointly measurable at some time, then they will remain so for all subsequent times. Hence one obtains a unique (possibly infinite) critical time after which incompatibility is irreversibly lost, reflecting the Markovian character of the evolution. Given the fundamental role of incompatibility within quantum theory, this feature can be seen as one aspect of ``emergent classicality'' in open quantum systems, studied much less than, say, loss of entanglement in quantum states. Of course, determining the critical times exactly is typically a hard problem; however, as we will demonstrate below, one may proceed by deriving analytical bounds.

The structure of the paper is the following. We begin by reviewing the relevant mathematical and conceptual framework of quantum measurement theory and open systems and discussing general aspects of joint measurability under quantum dynamics. We then focus on the situation where the dynamics exhibits emergence of classicality, in the sense of driving the system into a commutative decoherence-free subalgebra in the long-time limit. In particular, we demonstrate that under generic assumptions, any pair of observables will become jointly measurable at some finite critical time before the system becomes commutative, and illustrate this in specific examples where good analytical bounds on the critical time can be obtained.

\section{Preliminaries on quantum measurements and channels}

We introduce here the necessary notations and concepts related to quantum measurements, joint measurability, and quantum channels.

Let  $\hi,$ $\lh,$ $\th$ be the (complex separable) Hilbert space  and  the Banach spaces of bounded and trace class operators on which the description of a physical systems $\mathbf S$ is based, with the notions of states $\rho\in\sh$, positive trace one operators, and observables $\Eo:\hA\to\lh$, normalized positive operator measures with value (measurable) spaces $(\Omega,\hA)$, describing the measurement outcome statistics  $\Eo_\rho$, with $\Eo_\rho(X)=\tr{\rho\Eo(X)},$ $X\in\hA$,
for the observable $\Eo$ in the state $\rho$. We use $\id_\hi$ or briefly $\id$ for the identity operator of $\hi$. 

The notion of compatibility, the joint measurability, of any two observables $\Eo_1$ and $\Eo_2$, with the respective value spaces $(\Omega_i,\hA_i)$, $i=1,2$, can be defined in several equivalent ways, see, e.g.\ \cite{QM}; for instance, $\Eo_1,$ $\Eo_2$ are jointly measurable exactly when they have a joint observable, that is, there is an observable $\Go:\hA_1\otimes\hA_2\to\lh$ such that $\Go(X\times\Omega_2)=\Eo_1(X)$ and $\Go(\Omega_1\times Y)=\Eo_2(Y)$ for all $X\in\hA_1, Y\in\hA_2$. Observables $\Eo_1$ and $\Eo_2$ are called incompatible if they are not jointly measurable. In the case of projection valued observables $\Eo_1$ and $\Eo_2$
with standard Borel value spaces,  $\Eo_1$ and $\Eo_2$ are compatible
if and only if they commute, that is,
$\Eo_1(X)\Eo_2(Y)=\Eo_2(Y)\Eo_1(X)$ for all $X,$ $Y$, in which case $(X,Y)\mapsto\Eo_1(X)\Eo_2(Y)$ defines their unique joint observable.
In general, the mutual commutativity of $\Eo_1$ and $\Eo_2$ is not needed for their compatibility. 

In many concrete applications, the value spaces of the observables are the real Borel spaces $(\R,\br)$, their subspaces or their Cartesian products. For real projection valued observables we use the symbols $\Ao,\,\Bo,\,\Co$, and also identify them with the uniquely associated selfadjoint operators $A,\,B,\,C$, with, for instance, $A=\int a\Ao(da)$. Occasionally, we refer to such observables as sharp, and others as unsharp.

To describe the changes experienced by a system, we can use alternatively both the Schr\"odinger and the Heisenberg pictures of quantum channels, either as a change of states or as a change of observables. In the Heisenberg picture, a quantum channel is a normal unital completely positive linear map $\Lambda:\lh\to \lh$. Due to normality, it has the predual map $V:\th\to \th$ determined by ${\rm tr}[V(\rho)A]= {\rm tr}[\rho\Lambda(A)]$ for all $\rho\in\th$ and $A\in \lh$. For any Heisenberg channel $\Lambda$, and any observable (measurement) $\Eo$, we can form a new observable (measurement) $\Lambda(\Eo):=\Lambda\circ \Eo$, whose operational interpretation is given by the duality: if we measure $\Eo$ on the state $V(\rho)$ obtained from a given initial state $\rho$ by the application of the Schr\"odinger channel $V$, then the outcome distribution $\Eo_{V(\rho)}$ coincides with $\Lambda(\Eo)_{\rho}$, that is, the distribution of the observable $\Lambda(\Eo)$ in the initial state $\rho$.

\section{Open quantum systems}

In this section we review some basic aspects of open quantum systems, to the extent relevant to the joint measurability problems introduced in the subsequent settings.

\subsection{General remarks on subsystem dynamics}\label{sec:subgen}

In its fundamental formulation, open quantum system (see e.g.\ \cite{davies,alicki,breuer}) consists of an object system $\bS$ with Hilbert space $\mathcal H$ coupled to an environment $\bM$ with Hilbert space $\mathcal K$, such that the total system $\bS+\bM$ is assumed closed with evolution governed by some (strongly) continuous unitary representation $\R\ni t\mapsto U_t\in\hU(\hi\otimes\ki)$, with $\hU(\hi\otimes\ki)$ being the group of unitary operators on $\hi\otimes\ki$.
However, as well-known,  this approach is not very useful  in studying the evolution of the object system. 
A first step in that direction is to assume that the state of the environment is known at time $ t=0$, say, and that $\bS$
 can then independently be prepared in any  state $\rho\in\sh$.\footnote{The apparent inconsistency of this assumption with the assumption concerning the evolution of $\bS+\bM$ is easily settled by extending  (assuming invertibility) the dynamical semigroup of  the total system to a dynamical group and then applying the Wigner theorem (and the fact that all multipliers of $(\R,+)$ are exact) to reach back the initial assumption. 
For a full proof, see, e.g., \cite{simon,Gianni}.} 
If $\sigma$ is the state of $\bM$ at $t=0$, then, with the given assumption, 
the state of $\bS$ at any time $t$ is simply
\begin{equation}\label{redS}
V_t(\rho)={\rm tr_\ki}[U_t(\rho\otimes\sigma)U_t^*]
\end{equation}
where ${\rm tr_\ki}$ is the partial trace over $\ki$. 
The thus defined map $V_t$ is a quantum channel in its Schr\"odinger representation. If it would be invertible, then, by the Wigner theorem,  it could be implemented by a unitary or antiunitary operator. In general, this is not the case, that is, as a rule, $V_{-t}\ne V_t^{-1}$, in which case the  composition rule $V_{-t}\circ V_t=V_{-t+t}$ cannot hold. We recall, in addition,  that the  map $\R^+\ni t\mapsto V_t\in\th$ is strongly continuous.

Equivalently, in the Heisenberg  picture one has  the (strongly continuous) map  $t\mapsto\Lambda_t$, consisting of the 
quantum channels, given by
\begin{equation}\label{red}
\Lambda_t(A) = \mathcal E_\sigma (U_t^*(A\otimes \id) U_t)
\end{equation}
where $\mathcal E_\sigma:\mathcal L(\mathcal H\otimes \mathcal K)\to \mathcal L(\mathcal H)$ is the conditional expectation on the state $\sigma$, defined by the formula ${\rm tr}[(\rho\otimes \sigma)Z]={\rm tr}[\rho \mathcal E_\sigma(Z)]$. Again, in general, neither  the maps $\Lambda_t$ are  invertible nor the composition rule $\Lambda_{t'}\circ\Lambda_t=\Lambda_{t+t'}$ holds. 

Vast amount of  work has gone to justify  and specify further the possible structure of the subsystem dynamics arising from \eqref{redS}, or, equivalently,  \eqref{red}, and in that the work of G\"oran  Lindblad is of lasting importance. Following Lindblad's  pioneering paper  \cite{Lindblad1976} we shall briefly  summarize the prototype example  of a uniformly continuous dynamical semigroup.

\subsection{Uniformly continuous semigroups}
The (strongly continuous) dynamical map $t\mapsto V_t,$ $t\geq 0$, resp.\  $t\mapsto\Lambda_t,$ $t\geq 0$, obtained through \eqref{redS}, resp.\ \eqref{red}, is  very rarely a semigroup, but in several cases it can be approximated by such with reasonable accuracy \cite{davies,alicki,breuer,spohn}. For this reason, and due to its powerful mathematical implications, the semigroup dynamics is often used ``phenomenologically'', without considering the precise conditions under which the approximation might be valid. 

Assume now that $\{\Lambda_t\,|\,  t\geq 0 \}$  is a  semigroup, that is, $\Lambda_0={\rm Id}$, and $\Lambda_{t+t'}=\Lambda_{t'}\circ\Lambda_t$ for all $t,\,t'\ge 0$. 
As well known, it has  a densely defined generator and that the generator is a bounded linear map $\bL:\lh\to\lh$ exactly when the map $t\mapsto\Lambda_t$ is uniformly continuous,  in which case
$\bL=\lim_{h\to 0+}\frac 1h(\Lambda_h-I)$ and $\Lambda_t=e^{t\bL}$ for all $t\geq 0$ \cite{DS1958}. By the complete positivity of the maps $\Lambda_t$, the above equivalence gets a more specific expression: the semigroup is uniformly continuous exactly when its generator $\bL$
is of the form
\begin{equation}\label{GSKL}
\bL(A)=i[H,A]+\sum_l\left(L_l^*AL_l -\frac{1}{2}\{L_l^*L_l,A\}\right)
\end{equation}
for some selfadjoint  $H\in\lh$ and  countable $(L_l)\subset\lh$ with $\sum_lL^*_lL_l\in\lh$ (weakly, if the sum infinite).  Here $[\;,\,]$ and $\{\;,\,\}$ are the usual notions for commutator and anticommutator, respectively.
The operators $L_l$ are often called Lindblad operators.

In the Schr\"odinger picture, the generator of the dual semigroup $\{V_t\,|\,t\geq 0\}$ takes the form
\begin{equation}\label{GSKLstates}
\tilde\bL(\rho)=-i[H,\rho]+\sum_l\left(L_l\rho L_l^*-\frac 12\{L_l^*L_l,\rho\}  \right)
\end{equation}
The form of the generator \eqref{GSKL}, resp.\  \eqref{GSKLstates}, is known as the GSKL form  and the equations
\begin{eqnarray}
\frac d{dt}\Lambda_t(A)&=&\bL(\Lambda_t(A))\\
\frac d{dt}V_t(\rho)&=&\tilde\bL(V_t(\rho))
\end{eqnarray}
are known as the {\em Lindblad equations} with the initial conditions $\Lambda_0(A)=A$ and $V_0(\rho)=\rho$.
For the progress in the  study of the notion of unbounded GKLS generators we refer to
\cite{Davies1979,SHW}.

\subsection{The case of general decoherence}
We will focus mainly on a setting where the effect of \emph{decoherence} can be isolated from the overall evolution --- this will make it possible to discuss joint measurability in some detail later. We first recall here that the term ``decoherence'' is often used loosely, to mean the overall effect associated with the system being open. When referring specifically to the damping of the off-diagonal elements of operators in some fixed basis, \emph{pure decoherence} can be used \cite{breuer}. 
The starting point is the \emph{decoherence-free subalgebra} 
$$
\mathcal C=\{A\in \lh\mid \Lambda_t(A^*A)=\Lambda_t(A)^*\Lambda_t(A),\, \Lambda_t(AA^*)=\Lambda_t(A)\Lambda_t(A)^* \text{ for all }t\geq 0\}
$$
(studied since the 1970s, see e.g.\ \cite{Evans1977}), where $\Lambda_t$ behaves as if the system was closed, and the usual aim is to characterise situations where the evolution vanishes asymptotically at $t\rightarrow \infty$ in a suitable complemented subspace of $\mathcal C$, so that the effect of ``openness'' of the system is separated from the automorphic part. This setting has been studied by several authors over the past decades in various levels of generality \cite{Blanchard2003A, Blanchard2003B, Carbone2013, Carbone2015, Deschamps2016}; we give the precise statement here in the comparatively simple case where $\mathcal C$ is commutative, which is relevant for our aim. We further exclude the case where one cannot ``project'' observables onto the classical system $\mathcal C$, by assuming that $\mathcal C$ is atomic\footnote{That is, any projection of $\mathcal C$ has a minimal subprojection (in $\mathcal C$).} and the semigroup has a faithful invariant state.\footnote{A state $\sigma\in\sh$ is faithful if $\ker\sigma=\{0\}$, and  invariant if $V_t(\sigma)=\sigma$ for all $t\ge 0$.}
 Finally, in order to make an explicit connection to the Lindblad form, we take the semigroup to be uniformly continuous. Then the following result holds:

\begin{theorem}[\cite{Carbone2015,Deschamps2016}]\label{structure} Suppose that $\Lambda_t$ is uniformly continuous, has a faithful invariant state $\sigma\in\sh$, and the algebra $\mathcal C$ is atomic and Abelian with a resolution of identity 
 $(P_n)_{n\in I}$ consisting of minimal projections of $\mathcal C$. Denoting $\mathcal H_n = {\rm ran}\, P_n$, so that $\mathcal H = \oplus_{n\in I} \mathcal H_n$ and $\mathcal C= \oplus_{n\in I} \mathbb C\id_{\hi_n}$, the following properties hold:
\begin{itemize}
\item[(a)] The Hamiltonian and the Lindblad operators decompose along the direct sum: $L_l =\oplus_{n\in I} L_l^{(n)}$ and $H = \oplus_{n\in I} K^{(n)}$, where $L_l^{(n)},K^{(n)}\in \mathcal L(\hi_n)$.
\item[(b)] The restriction of $\Lambda_t$ to each invariant subalgebra $P_n\lh P_n$ has a unique faithful invariant state $\sigma_n$, and 
$\sigma = \oplus_{n\in I} {\rm tr}[\sigma P_n] \sigma_n$.
Furthermore, the map $\Gamma:\lh \to \mathcal C$ given by
$\Gamma(A) = \oplus_{n\in I} {\rm tr}[\sigma_n P_nAP_n]P_n$
is a normal conditional expectation\footnote{We recall that a \emph{conditional expectation} $\Gamma:\,\lh\to\mathcal A$, where $\mathcal A$ is a von Neumann subalgebra of $\lh$, is a linear positive map such that $\Gamma(A)=A$ and $\Gamma(AB\tilde A)=A\Gamma(B)\tilde A$ for all $A,\,\tilde A\in\mathcal A$ and $B\in\lh$. Moreover, it follows that $\Gamma$ is a 
unital idempotent.
Existence of \emph{normal} conditional expectations is a rather tricky matter, generally relying on the Tomita-Takesaki theory \cite{Takesaki1971}. In particular, normal conditional expectations do not generally exist for non-atomic Abelian subalgebras of $\lh$.
} onto $\mathcal C$ such that $\Lambda_t\circ\Gamma=\Gamma\circ\Lambda_t$ for all $t\ge 0$ and ${\rm tr}[\sigma \Gamma(A)]={\rm tr}[\sigma A]$ for all $A\in \lh$.
\item[(c)] $\ker \Gamma$ is a weak-* closed $*$--invariant and $\Lambda_t$--invariant subspace of $\lh$ such that $\lh = \mathcal C \oplus \ker \Gamma$, and ${\rm w}^*{-}\lim_{t\rightarrow\infty} \Lambda_t(A)=0$ for all $A\in \ker \Gamma$.
\end{itemize}
\end{theorem}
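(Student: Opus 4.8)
The plan is to derive everything from the block structure that $\mathcal{C}$ forces on the GKLS data, together with the classical ergodic theory of quantum Markov semigroups with a faithful invariant state. The algebraic core is the standard description of the decoherence-free subalgebra: since $\bL$ is bounded and $\mathcal{C}$ weak-$*$ closed, $\mathcal{C}$ is $\Lambda_t$-invariant (immediate from the semigroup law and the defining multiplicativity relations), so $\bL(\mathcal{C})\subseteq\mathcal{C}$; differentiating $\Lambda_t(A^*A)=\Lambda_t(A)^*\Lambda_t(A)$ at $t=0$ for $A\in\mathcal{C}$ gives $\bL(A^*A)-A^*\bL(A)-\bL(A)^*A=0$, and a direct computation with \eqref{GSKL} identifies the left-hand side with $\sum_l[A,L_l]^*[A,L_l]$ (the Hamiltonian part cancels because $X\mapsto i[H,X]$ is a $*$-derivation). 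Hence $[A,L_l]=0=[A,L_l^*]$ for all $l$, and so $\bL$ acts on $\mathcal{C}$ as $A\mapsto i[H,A]$.

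Part (a) is then immediate: each $P_n\in\mathcal{C}$, so $L_l$ commutes with every $P_n$ and is block diagonal; and $\bL(P_n)=i[H,P_n]$ lies in $\mathcal{C}$ while all its diagonal blocks $P_m\,i[H,P_n]\,P_m$ vanish identically and $\mathcal{C}=\oplus_m\mathbb{C}\id_{\hi_m}$ is block diagonal, forcing $i[H,P_n]=0$ and $H=\oplus_n P_nHP_n$. In particular every corner $P_m\lh P_n$ is $\Lambda_t$-invariant with $\Lambda_t(P_mAP_n)=P_m\Lambda_t(A)P_n$, so $P_n\lh P_n=\mathcal{L}(\hi_n)$ carries a uniformly continuous quantum Markov semigroup $\Lambda_t^{(n)}$ with GKLS data $K^{(n)},(L_l^{(n)})$, and its decoherence-free subalgebra is $\mathcal{C}\cap\mathcal{L}(\hi_n)=\mathbb{C}\id_{\hi_n}$.

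For part (b), note first that $\mathcal{C}=\mathcal{F}$, the fixed-point algebra: $\mathcal{C}\subseteq\mathcal{F}$ because $\bL(A)=i[H,A]=0$ for $A=\oplus_n c_n\id_{\hi_n}\in\mathcal{C}$ by part (a), while $\mathcal{F}\subseteq\mathcal{C}$ since a fixed point satisfies the defining multiplicativity relations by the Kadison--Schwarz inequality and faithfulness of $\sigma$. By Frigerio's theory of quantum Markov semigroups with a faithful normal invariant state, the mean limit $\mathcal{E}=\lim_{T\to\infty}\tfrac1T\int_0^T\Lambda_t\,dt$ exists in the weak-$*$ topology and is a $\sigma$-preserving normal conditional expectation of $\lh$ onto $\mathcal{F}=\mathcal{C}$. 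Writing $\mathcal{E}(A)=\oplus_n\omega_n(P_nAP_n)P_n$ for states $\omega_n$ on $\hi_n$ and imposing $\tr{\sigma\mathcal{E}(A)}=\tr{\sigma A}$, testing against off-diagonal corners forces $P_m\sigma P_n=0$ for $m\neq n$ (hence $\sigma=\oplus_n\tr{\sigma P_n}\,\sigma_n$ with $\sigma_n=P_n\sigma P_n/\tr{\sigma P_n}$) and testing against diagonal corners forces $\omega_n=\sigma_n$, so $\mathcal{E}=\Gamma$. The restriction of $\Gamma$ to $\mathcal{L}(\hi_n)$ is the rank-one map $X\mapsto\tr{\sigma_nX}\id_{\hi_n}$, which is the ergodic projection of $\Lambda_t^{(n)}$; hence $\sigma_n$ is the unique invariant state of $\Lambda_t^{(n)}$, faithful because $\sigma$ is. The remaining assertions of (b), namely $\Lambda_t\circ\Gamma=\Gamma\circ\Lambda_t=\Gamma$ and $\tr{\sigma\Gamma(A)}=\tr{\sigma A}$, then follow by a direct computation using $\Lambda_t(P_n)=P_n$ and invariance of each $\sigma_n$.

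In part (c), the direct-sum splitting $\lh=\mathcal{C}\oplus\ker\Gamma$ and the weak-$*$ closedness, $*$-invariance and $\Lambda_t$-invariance of $\ker\Gamma$ are formal, since $\Gamma$ is a normal, $*$-preserving idempotent onto $\mathcal{C}$ with $\Gamma\Lambda_t=\Lambda_t\Gamma$. The substantive claim is $\mathrm{w}^*\text{-}\lim_{t\to\infty}\Lambda_t(A)=0$ for $A\in\ker\Gamma$, and I expect this to be the main obstacle. Since we have shown that the decoherence-free algebra $\mathcal{C}$ equals the fixed-point algebra $\mathcal{F}$, we are exactly in the setting of the Frigerio--Verri long-time asymptotics theorem, which --- for a quantum Markov semigroup with a faithful normal invariant state whose decoherence-free and fixed-point algebras coincide --- yields $\Lambda_t(A)\to\Gamma(A)$ weak-$*$ for every $A\in\lh$, whence the claim. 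The real weight is in this ergodic input: in infinite dimensions, passing from the (always available) Ces\`aro/mean-ergodic convergence to the genuine limit is delicate, as it requires controlling the peripheral spectrum of $\bL$ on the imaginary axis. That $\ker\Gamma$ contains no nonzero eigenoperator of $\bL$ with purely imaginary eigenvalue follows again from Kadison--Schwarz and faithfulness of $\sigma$ (such an eigenoperator lies in $\mathcal{C}$, hence is fixed by $\Gamma$, hence vanishes in $\ker\Gamma$), but excluding continuous peripheral spectrum is where the uniform-continuity and atomicity hypotheses, and the structure theory of \cite{Carbone2015,Deschamps2016}, are genuinely used.
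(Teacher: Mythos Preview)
The paper does not give its own proof of this theorem; it is quoted from the literature (references \cite{Carbone2015,Deschamps2016}) and stated without argument, serving purely as structural background for the subsequent joint-measurability results. There is therefore nothing in the paper to compare your proposal against.

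That said, your sketch follows essentially the route taken in those cited references (and the earlier work of Evans, Frigerio, and Frigerio--Verri on which they build): compute the dissipation function $\bL(A^*A)-A^*\bL(A)-\bL(A)^*A=\sum_l[A,L_l]^*[A,L_l]$ to show every $L_l$ commutes with $\mathcal{C}$, deduce the block structure of $H$ from $\bL(\mathcal{C})\subseteq\mathcal{C}$, identify $\mathcal{C}$ with the fixed-point algebra via Kadison--Schwarz plus faithfulness of $\sigma$, obtain $\Gamma$ as the mean-ergodic projection, and finally invoke the Frigerio--Verri asymptotic theorem for part (c). The argument is correctly organised and the algebraic steps are right. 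Your final paragraph is appropriately candid: the genuine analytic content is the weak-$*$ convergence in (c), and while your reduction to ``$\mathcal{C}=\mathcal{F}$ plus Frigerio--Verri'' is the right strategy, in infinite dimensions one still needs to check that the relevant peripheral-spectrum hypothesis is met --- this is exactly where the cited papers do the work, so deferring to them at that point is fair rather than a gap.
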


Here part (c) expresses the ``emergence of classicality'' property in this setting -- for each observable $\Eo$, only the observable $\Gamma(\Eo)$ survives in the limit, and these are all mutually commutative, belonging to the algebra $\mathcal C$. Note that here the normality of the conditional expectation is crucial since it ensures (together with the positivity of $\Gamma$) that $\Gamma(\Eo)$ is indeed an observable.

\subsection{The case of pure decoherence}\label{pure1}

It is useful to write down explicitly the instance of the above structure in which the decoherence-free subalgebra $\mathcal C$ is maximal Abelian, that is, equal to the algebra of diagonal operators $\sum_n a_n |n\rangle\langle n|$ in a fixed basis $\{|n\rangle\mid n\in I\}$. In this case each $\mathcal H_n=\C|n\>$ is one-dimensional (and we restrict to a finite-dimensional total Hilbert space for technical simplicity). From Theorem \ref{structure} part (a) we then find that each Lindblad operator $L_l$ must be diagonal, that is,
$$
L_l = \sum_{n} u^{(l)}_n |n\rangle\langle n|,\quad H = \sum_n h_n |n\rangle\langle n|
$$
where $u^{(l)}_n\in \mathbb C$, $h_n\in \mathbb R$ are constants. It follows that
$\bL(A) = D \star A$, where $\star$ stands for the Schur (elementwise) product of matrices in the basis $\{|n\rangle\mid n\in I\}$, with the ``generator'' $D=(d_{nm})$ given by
$$
d_{nm} = i(h_n-h_m) -\frac 12\sum_{l} \Big(|u^{(l)}_n|^2+ |u^{(l)}_m|^2-2\overline{u^{(l)}_n}u^{(l)}_m\Big).
$$
It follows then immediately that the generating semigroup also has the Schur product form $\Lambda_t(A) = C(t) \star A$, where
$$
c_{nm}(t) = e^{t d_{nm}} = \langle \varphi_n(t)|\varphi_m(t)\rangle
$$
where the last form shows explicitly that the multiplier matrix $C(t)$ is positive for all $t\geq 0$, being expressed in terms of scalar products of the vectors
$$
\varphi_n(t) = e^{-ith_n} |\sqrt{t} u_n^{(1)}\rangle\otimes \cdots \otimes |\sqrt{t} u_n^{(K)}\rangle\in L^2(\mathbb R)^{\otimes K}
$$
in the $K$-fold tensor product of standard coherent states $|z\>$ of the harmonic oscillator for which 
$\<z|z'\>=e^{-(|z|^2|+|z'|^2-2\overline z z')/2}$ holds. It seems  that this dilation does not in general have a representation in terms of a unitary evolution in the combined system consisting of the original system and the  dilation space $L^2(\mathbb R)^{\otimes K}$.

Finally, we need to characterise the crucial assumption that the decoherence-free subalgebra $\mathcal C$ coincides with the diagonal algebra:
$$\mathcal C = {\rm span}\{|n\rangle\langle n| \mid n\in I\}.$$

Note first that each $|n\rangle\langle n|$ is clearly unchanged by all $\Lambda_t$, so that $|n\rangle\langle n|\in \mathcal C$ for all $n$ in any case. If $A\in \lh$ satisfies $\Lambda_t(A^*A)=\Lambda_t(A)^*\Lambda_t(A)$ then
$$
\langle n|A^*A|n\rangle = \sum_{k} |c_{nk}(t)|^2 |\langle n|A|k\rangle|^2,
$$
and the second condition in the definition of $\mathcal C$ gives a similar expression. 
It follows immediately that $\mathcal C$ contains a non-diagonal operator iff $|c_{nk}(t)|^2 =1$ for some pair $n\neq k$. Therefore, the relevant condition for decoherence is
$|c_{nk}(t)|^2 = e^{2t {\rm Re}\,d_{nk} }<1$ for all $n\neq k$, which is equivalent to
$${\rm Re}\,d_{nm}=-\frac12\sum_l |u^{(l)}_n-u^{(l)}_m|^2<0 \quad \text{for all }n\neq m.$$
We now observe directly that this is exactly the required decoherence condition
$$\lim_{t\rightarrow\infty}\Lambda_t = \Gamma,$$
implied by Theorem \ref{structure}. The central projections of $\mathcal C$ are then $P_n=|n\rangle\langle n|$, the faithful invariant states are of the form
$$
\sigma = \sum_n p_n |n\rangle\langle n|
$$
where $p_n>0$ for all $n$, and the conditional expectation reads
$$
\Gamma(A) = \sum_n \langle n|A|n\rangle |n\rangle\langle n|.
$$

\section{Joint measurability under subsystem dynamics}

We now proceed to introduce the measurement aspect to open system dynamics, in the subsystem setting described in Section \ref{sec:subgen}. The operational context is the following: suppose the system is initialised at time $t=0$ in a state $\rho$, and we measure an observable $\mathsf E$ at time $t$; then the probability measure for the outcomes is $X\mapsto {\rm tr}[V_t(\rho)\Eo(X)]={\rm tr}[\Lambda_t(\mathsf E)\rho]$, which is equivalent to measuring the Heisenberg-evolved observable $\Lambda_t(\mathsf E)$ on the initial state $\rho$. Joint measurability of observables in this dynamical setting corresponds to the emergence of a hidden variable model from which the joint measurement outcomes can be deduced; this means, for instance, that Bell inequality violations or quantum steering (involving some coupled system) is no longer possible using these measurements \cite{Kiukas2017}. 

There are two conceptually different scenarios: we can either use the other system to implement the measurement on the object system, or regard the other system as an ambient environment which merely adds noise to the system. We consider briefly these scenarios in the following two subsections. In both cases, the starting point for joint measurability considerations is the following simple observation that measurement incompatibility cannot be ``created'' by the action of a quantum channel. Due to the importance of this fact in the subsequent development we record it in the following proposition. 

\begin{proposition}\label{JMLambdaprop} Let $\Lambda$ be a quantum channel and $\mathsf E,\,\mathsf F$ jointly measurable observables. Then $\Lambda(\mathsf E)$ and $\Lambda(\mathsf F)$ are jointly measurable.
\end{proposition}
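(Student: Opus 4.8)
The plan is to lift a joint observable through the channel. Let $(\Omega_i,\hA_i)$, $i=1,2$, denote the value spaces of $\mathsf E$ and $\mathsf F$. By joint measurability there is an observable $\mathsf G:\hA_1\otimes\hA_2\to\lh$ with $\mathsf G(X\times\Omega_2)=\mathsf E(X)$ and $\mathsf G(\Omega_1\times Y)=\mathsf F(Y)$ for all $X\in\hA_1$, $Y\in\hA_2$. I would then set $\mathsf G':=\Lambda\circ\mathsf G:\hA_1\otimes\hA_2\to\lh$ and show that $\mathsf G'$ is a joint observable of $\Lambda(\mathsf E)=\Lambda\circ\mathsf E$ and $\Lambda(\mathsf F)=\Lambda\circ\mathsf F$.

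First I would verify that $\mathsf G'$ is an observable, i.e.\ a normalized positive operator measure. Positivity of each $\mathsf G'(Z)=\Lambda(\mathsf G(Z))$ is immediate from the positivity of $\Lambda$ and of $\mathsf G$; the normalization $\mathsf G'(\Omega_1\times\Omega_2)=\Lambda(\id)=\id$ follows from unitality of $\Lambda$. For countable additivity, given a disjoint sequence $(Z_k)$ in $\hA_1\otimes\hA_2$ the partial sums $\mathsf G\big(\bigcup_{k\le N}Z_k\big)$ converge to $\mathsf G\big(\bigcup_k Z_k\big)$ in the weak (equivalently $\sigma$-weak) operator topology, and since $\Lambda$ is normal, hence $\sigma$-weakly continuous, the partial sums of $\mathsf G'$ converge to $\mathsf G'\big(\bigcup_k Z_k\big)$ in the same sense; this is precisely the sense in which an operator measure is required to be $\sigma$-additive. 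Thus $\mathsf G'$ is an observable.

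Next I would check the marginal conditions. Using linearity of $\Lambda$ and the marginal relations for $\mathsf G$,
$$
\mathsf G'(X\times\Omega_2)=\Lambda(\mathsf G(X\times\Omega_2))=\Lambda(\mathsf E(X))=\Lambda(\mathsf E)(X),
$$
and symmetrically $\mathsf G'(\Omega_1\times Y)=\Lambda(\mathsf F)(Y)$. Hence $\mathsf G'$ is a joint observable for $\Lambda(\mathsf E)$ and $\Lambda(\mathsf F)$, which establishes their joint measurability.

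The only point demanding any care --- the "main obstacle," such as it is --- is the $\sigma$-additivity of $\mathsf G'$, and this is exactly where the normality of the Heisenberg channel $\Lambda$ enters; the remaining verifications are routine and purely algebraic. One may equivalently argue in the Schr\"odinger picture via the predual $V$: since ${\rm tr}[\rho\,\mathsf G'(Z)]={\rm tr}[V(\rho)\,\mathsf G(Z)]$ for all states $\rho$, the operator measure $\mathsf G'$ merely reports the statistics of $\mathsf G$ on the pushed-forward state $V(\rho)$, which makes the countable additivity of the associated probability measures manifest and transparently exhibits $\mathsf G'$ as a joint observable.
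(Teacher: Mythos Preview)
Your proof is correct and follows essentially the same approach as the paper: take a joint observable $\mathsf G$ for $\mathsf E,\mathsf F$ and observe that $\Lambda(\mathsf G)$ is a joint observable for $\Lambda(\mathsf E),\Lambda(\mathsf F)$. You add more detail than the paper does, in particular the verification of $\sigma$-additivity via normality of $\Lambda$, which the paper leaves implicit.
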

\begin{proof}
Suppose that a pair of observables $\mathsf E,\,\mathsf F$ are jointly measurable with a joint observable $\mathsf G$, so that $\mathsf E(X) = \mathsf G(X\times \Omega)$ and $\mathsf F(Y)=\mathsf G(\Omega\times Y)$. Then clearly $\Lambda(\mathsf F)(X)= \Lambda(\mathsf G(X\times \Omega))$ and $\Lambda(\mathsf F)(X)= \Lambda(\mathsf G(\Omega\times Y))$, so $\Lambda(\mathsf G)$ is a joint observable for $\Lambda(\mathsf E)$ and $\Lambda(\mathsf F)$. 
\end{proof}

\subsection{Measurement interaction}\label{MeasInt}

Standard way of implementing a general quantum measurement is to couple the object system to an ancillary pointer system through a unitary interaction and measuring there; the object system is then manifestly open, and one can describe the actually measured observable in terms of the reduced subsystem dynamics. We can then ask when two observables implemented this way are jointly measurable.

To formalise and to illustrate the setting, let $\bS$ interact with another system $\bM$ through a unitary coupling of the form $U=e^{i A\otimes B}$, where $A$ and $B$
are the selfadjoint operators corresponding to the sharp observables $\Ao$ and $\Bo$ of $\bS$ and $\bM$, respectively, and assume 
that the two systems are initially (prior to the interaction) independent of each other,
 the total system being in a product state $\rho\otimes\sigma$ (which could be taken to be pure).

If $\bM$ with $U$ is to model a measurement of $\bS$ one needs, in addition,  a pointer observable $\Zo$, with a value space $(\Xi,\hF)$,   to determine  the observable  reproduced by its outcome statistics, that is, to find out the observable $\Eo$ of $\bS$ for which  
$\Eo_\rho(X)=\Zo_{\tilde\sigma}(X)$, with $\tilde\sigma={\rm tr}_\hi[U(\rho\otimes\sigma)U^*]$, for all $X$ and $\rho$.  
A simple computation gives\footnote{ We recall: $e^{iA\otimes B}=\int\int e^{iab}\Ao(da)\otimes\Bo(db)
=\int e^{ibA}\otimes\Bo(db) =\int \Ao(da)\otimes e^{iaB}$. }
$$
\Eo(X)=\int \Zo_{e^{iaB}\sigma e^{-iaB}}(X)\,\Ao(da),\qquad X\in\Xi,
$$
which shows that the actually measured observable $\Eo$ is a smeared version of the observable $\Ao$, smeared with the Markov kernel $(X,a)\mapsto p(X,a)=\tr{e^{iaB}\sigma e^{-iaB}\Zo(X)}$. Clearly, if $\Zo$ can be chosen to be covariant under $a\mapsto e^{-iaB}$, that is, with $\hF=\br$,  
 $e^{-iaB}\Zo(X)e^{iaB}=\Zo(X-a)$, 
then $\Eo$ is just the convolution of $\Ao$ with the initial pointer distribution $\Zo_\sigma$; $\Eo=\Zo_\sigma*\Ao$.
This model was used already by von Neumann \cite[pp.\,236-7]{vN1932} to demonstrate the possibility of measuring (within his  newly developed theory of a measurement process)  the position $Q$ of a system $\bS$
choosing
$A=Q$ (for $\bS$) and $B=P$ (momentum), $Z=Q$ (for  $\bM$) and choosing $\sigma$ such that the initial pointer distribution $\Zo_\sigma=\Qo_\sigma$ approaches a point measure, in which case the measured $\bS$--observable  $\Zo_\sigma* \Qo$ approaches its sharp position observable $\Qo$.

On the other hand, if $\bM$ together with $U$  is to model just an environment of the system then one aims to determine 
its influence on $\bS$ without any use of a pointer.
This  manifests itself directly  in the change of the measurement statistics of any of  its observables:
for any $\rho$, $\Eo$, and $X\in\hA$,
$$
\tr{\rho\Eo(X)}\quad\Longrightarrow \quad\tr{V(\rho)\Eo(X)}=\tr{\rho\Lambda(\Eo(X))},
$$
where, with  \eqref{redS} and \eqref{red},
\begin{eqnarray*}
 V(\rho)&=& \int_\R e^{ibA}\rho e^{-ibA}\, \Bo_\sigma(db)\\
 \Lambda(\Eo(X))&=&
\int_\R  e^{-ibA}\Eo(X)e^{ibA}\, \Bo_\sigma(db).
\end{eqnarray*}
Clearly, if $\Eo$ and $\Ao$ commute with each other, then  $\Lambda(\Eo)=\Eo$.
Moreover, if $\Eo$ and $\Fo$ are jointly measurable, then by Proposition \ref{JMLambdaprop},  also $\Lambda(\Eo)$ and $\Lambda(\Fo)$ are jointly measurable. However, it may well happen that the blurred observables $\Lambda(\Eo)$ and $\Lambda(\Fo)$ are compatible even though $\Eo$ and $\Fo$ were not (see example, below).
We note further that
if $\Eo$ happens to be   covariant under $b\mapsto e^{-ibA}$,  then %hat is, $e^{-ib A}\Eo(X)e^{ibA}=\Eo(X-b)$, 
$\Lambda(\Eo)$ is a smearing (convolution) of $\Eo$ with the probability measure $\Bo_\sigma$, $\Lambda(\Eo)=\Bo_\sigma*\Eo$.

With the choice $A=Q$ (for $\bS$) and $B=P$ (for $\bM$) one sees, for instance, that the momentum $P$ of system $\bS$ turns unsharp: 
$\Po\mapsto\Lambda(\Po)= \Po_\sigma*\Po$, convolution of the momentum  of $\bS$ with the momentum distribution of $\bM$.
On the other hand, the `conjugate' choice $A=P$ (for $\bS$) and $B=Q$ (for $\bM$) would  mold  the position  of $\bS$ to an unsharp one, $\Lambda(\Qo)=\Qo_\sigma *\Qo$, blurred with the position distribution $\Qo_\sigma$  of $\bM$. As well known, no such pair $(\Qo,\Lambda(\Po))$ or $(\Lambda(\Qo),\Po)$ is compatible.

Consider next the case where $\bM$ consists of two parts, $\bM=\bM_1+ \bM_2$, with the Hilbert space $\ki_1\otimes\ki_2$ and
with the initial state $\sigma=\sigma_1\otimes\sigma_2$, and let $\bS$ interact with $\bM$ via the Arthurs-Kelly coupling, conveniently written as
$$
U=U'U_2U_1=e^{\frac i2 \id\otimes P_1\otimes Q_2} e^{iP\otimes\id_1\otimes Q_2}e^{-iQ\otimes P_1\otimes\id_2}.
$$
In this interaction, the momentum $P$ and the position $Q$ of $\bS$ get changed to
$\Po\mapsto \Lambda(\Po)=\Po_{1,\sigma_1}*\Po$ (due to $U_1$)
 and $\Qo\mapsto\Lambda(\Qo)= \Qo_{2,\sigma_2}*\Qo$ (due to $U_2$), respectively (with no influence from the $\bM_1-\bM_2$ coupling).

Such a pair of unsharp position and momentum is not necessarily jointly measurable. 
In fact,
they
are known to be compatible  exactly when the convolving propability measures 
$\Po_{1,\sigma_1}$ and $\Qo_{2,\sigma_2}$ are Fourier related, that is, $\Po_{1,\sigma_1}=\Po_T$ and $\Qo_{2,\sigma_2}=\Qo_T$ for some $T\in\sh$ \cite{CHT2005}.  With an appropriate choice of the initial states $\sigma_1$ and $\sigma_2$ this condition can be met.

 It is another well known fact that if one uses $\bM$ with the Arthurs-Kelly coupling as a measuring device, with the pointer observable $Q_1\otimes P_2$, then the measured observable of $\bS$ is a biobservable with the two marginal observables $\Qo_{1,\sigma_1}*\Qo$ and $\Po_{1,\sigma_1}*\Po_{2,\sigma_2}*\Po$.
With this choice of the pointer, the $\bM_1-\bM_2$ coupling now adjusts the convolving probability measures  $\Qo_{1,\sigma_1}$ and $\Po_{1,\sigma_1}*\Po_{2,\sigma_2}$ of $\Qo$ and $\Po$ 
such that the resulting unsharp position and momentum have a joint observable (given by the measurement)  for any choice of $\sigma_1$ and $\sigma_2$.
Eight-port homodyne detection gives an optical implementation of such an  approximate joint measurement of position and momentum. The study of this joint measurement model goes back to Arthurs and Kelly \cite{ArthursKelly1965}, for further elaboration and  details, see, e.g.\ \cite [Sec.\ 19.1, 19.4]{QM}.

\begin{remark}
In his unfinished manuscript  A Neo-Copenhagen Quantum Mechanics,
published in this special volume of OSID, 
G\"oran Lindblad  writes: ``[the] standard picture [of a measurement] must be rejected'', since 
``[i]t is really  a process of copying a part of the information
from the state of [$\bS$] to the state of [$\bM$], both equally quantum'' and does not
``describe the outcome of the
measurement as an event with classical, factlike properties like stability and objectivity'',
 Lindblad \cite{Lindblad2023}.  
A  part from its limitations, we wish to emphasize that the probability reproducibility condition
$\Eo_\rho=\Zo_{\tilde\sigma},$ $\rho\in\sh$, which we use to define the actually measured observable
is highly useful for an operational definition of the object observables, and this, in particular, since each object observable admits such a representation, see, e.g., \cite[Theorem 10.1]{QM}.
Clearly, its does not describe the occurence of an actual result, a reason why such measurement schemes are, as Lindblad [{\em ibid.}] notes,  occasionally called premeasurements, see, e.g. \cite{BCL,BLM}.
\end{remark}

\subsection{Environmental interaction} Here we ignore the implementation of the measurements, focusing only on the time evolution of the measurements in the object system, the aim being to investigate how joint measurability changes under the dynamics. Obviously, this depends heavily in the specifics of the dynamics --- the purpose of the present section is to illustrate this with a simple non-Markovian example (adapted from \cite{EnricoGianni}) where joint measurability appears and disappears periodically. The reader should contrast this behaviour to the case of semigroup evolution considered in the subsequent section, where incompatibility, once lost, cannot reappear.

\begin{example}\label{ex1}
Let $\bS$ and $\bM$ be two qubit systems, with $\hi=\ki=\C^2$ and $|+\>=(1,0)$, $|-\>=(0,1)$,  and assume that the  evolution $t\mapsto U_t=e^{it H}$ of $\bS+\bM$ is given with respect to the total spin basis  
$
\left\{\ket{++}, \; \frac 1{\sqrt 2}\big(\ket{+-}+\ket{-+}\big),\; \ket{--},\; \frac 1{\sqrt 2}\big(\ket{+-}-\ket{-+}\big)\right\} 
$
where the matrix of $U_t$ is
$$
\begin{pmatrix}
\cos\omega t && -\sin\omega t \\
\sin\omega t&& \cos\omega t
\end{pmatrix}\oplus\id.
$$
In the canonical basis  $\{\ket {++},\ket{+-},\ket{-+},\ket{--}\}$,
$$ 
U_t= \frac 12
\begin{pmatrix}
2\cos\omega t &&-\sqrt 2\sin\omega t &&-\sqrt 2\sin\omega t &&0\\
\sqrt 2\sin\omega t &&\cos\omega t +1&&\cos\omega t -1&&0\\
\sqrt 2\sin\omega t &&\cos\omega t -1&&\cos\omega t +1&&0\\
0&&0&&0&&2
\end{pmatrix}
$$
 and the Hamiltonian is
$
H=-\frac{1}{\sqrt2}\omega\left(P_3\otimes\sigma_2+\sigma_2\otimes P_3\right)
$
where $P_i=\frac 12(\id+\sigma_i)$, $i=1,2,3$, with $\sigma_i$ being the Pauli matrices (also $\sigma_0=\id$).

Let $\sigma=\kb ++$ be the fixed state of $\bM$. 
Using  \eqref{red}
we determine the change $P_i\mapsto \Lambda_t(P_i)$  of the basic qubit projections $P_i$ under the evolution. 
By direct calculation, we get
\begin{eqnarray*}
\Lambda_t(P_1)&=& \frac 12
\begin{pmatrix}
1+\frac1{\sqrt2}\sin(2\omega t)
&&  \frac{1}{2}[\cos(\omega t)+\cos(2\omega t)]\\
\frac{1}{2}[\cos(\omega t)+\cos(2\omega t)] && 1-\frac1{\sqrt2}\left[\sin(\omega t)+\frac12\sin(2\omega t)\right]
\end{pmatrix}
=
\frac 12\sum_{k=0}^3x_k(t)\sigma_k
\\
\big(x_k(t)\big)_{k=0}^3&=& 
\left(1-\frac{\sin(\omega t)}{2\sqrt2}+\frac{\sin(2\omega t)}{4\sqrt2},\;
\cos(\omega t)+\cos(2\omega t),\;0,\;
\frac{\sin(\omega t)}{2\sqrt2}+\frac3{4\sqrt2}\sin(2\omega t)
\right)
\\
\Lambda_t(P_2)&=& \frac 12
\begin{pmatrix}
1&&  -\frac{i}{2}[1+\cos(\omega t)]\\
  \frac{i}{2}[1+\cos(\omega t)]&& 1
\end{pmatrix}
=
\frac 12\sum_{k=0}^3y_k(t)\sigma_k
\\
\big(y_k(t)\big)_{k=0}^3&=&
\left(
1,\;0,\;
\frac12[1+\cos(\omega t)],\;0)
\right) \\
\Lambda_t(P_3)&=& \frac 12
\begin{pmatrix}
1+\cos^2(\omega t)&&  -\frac1{\sqrt2}\left[\sin(\omega t)+\frac12\sin(2\omega t)\right]\\
-\frac1{\sqrt2}\left[\sin(\omega t)+\frac12\sin(2\omega t)\right]&& 
1+\frac12\left[\sin^2(\omega t)-2\cos(\omega t)\right]
\end{pmatrix}.
%=\frac 12\sum_{k=0}^3z_k(t)\sigma_k\\
%\big(z_k(t)\big)_{k=0}^3&=& \left(???\right)\\
\end{eqnarray*}
Without restricting generality, we assume that the shortest period of oscillation is 1, that is, the angular velocity $\omega=2\pi$. Clearly, at times $t=0$ and $t=1$, $\Lambda_t$ is the identity channel and the effects $P_1$, $P_2$, and $P_3$ are not (even pairwise) compatible.
At time $t=\frac12$, $U^L_{1/2}$ is diagonal and all effects $\Lambda_{1/2}(P_1)=\frac12\id$,
$\Lambda_{1/2}(P_2)=\frac12\id$, and $\Lambda_{1/2}(P_3)=\id$ are compatible, 
reflecting (in view of Proposition \ref{JMLambdaprop}) the fact that effects at time $t=1$ cannot be obtained from those at $t=\frac12$ by an application of any quantum channel.
Let us then consider the compatibility of $\Lambda_t(P_1)$ and $\Lambda_t(P_2)$ at an arbitrary time $t$.

Using the compatibility criteria of  \cite[Theorem 1]{Yu_etal_2010} (see also \cite{Guhne_etal_2023})
for the qubit effects,
$\Lambda_t(P_1)$ and $\Lambda_t(P_2)$ are compatible exactly when the function
\begin{eqnarray*}
f(t)&:=&
\sqrt{\frac12\sin^2(\pi t)\left[3+\cos(2\pi t)-2\sqrt2\sin(2\pi t)\right]}+\\
&&+
\sqrt{\frac12\sin^2(\pi t)\left[3+\cos(2\pi t)+2\sqrt2\sin(2\pi t)\right]}
-\cos(2\pi t)-1
\end{eqnarray*}
is not negative. This happens, e.g., in the closed interval $[t_1,t_2]$ where $t_1\approx0.196$ and $t_2\approx0.804$, see Figure \ref{kuva}. 

\begin{figure}[htbp]
\begin{center}
\includegraphics[width=0.7\textwidth]{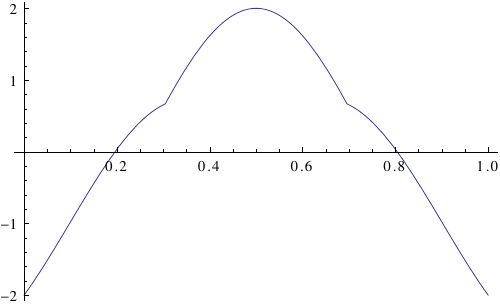} 
\caption{The function $t\mapsto f(t)$ when $t$ runs from 0 to 1.}
\label{kuva}
\end{center}
\end{figure}

\end{example}

\section{Emergent joint measurability under semigroup evolution}

The study of joint measurability in open quantum systems can be seen as a part of the overall effort at understanding the quantum-classical transition due to decoherence. The basic intuition is that decoherence introduces noise to the system, leading to observables eventually becoming ``classical'' in the sense that initially incompatible sets of observables will be jointly measurable when performed at some time $t$ after the start of the evolution. This ``emergence'' of joint measurability can then be taken as an operationally motivated indicator of the quantum-classical transition, which goes well beyond the mere study of decoherence rates or the asymptotic behaviour of the dynamics.

In order to extend the observation in Proposition \ref{JMLambdaprop} to a dynamical setting we need to require that joint measurability at any given time instant implies joint measurability at all subsequent times as well. This can be seen as a form of \emph{Markovianity} in open quantum systems, in the sense of absence of ``revivals'' of incompatibility, which are easily seen to occur in the non-Markovian case \cite{Addis2016}. In Example \ref{ex1}  this is clear, as the system is even periodic, returning to the original measurements after time $2\pi/\omega$.
The general notion of Markovianity has been discussed a lot in the past decades, and is still an active topic; one of the basic definitions is \emph{P-divisibility}, which states that dynamical maps at any two time instants are connected by a positive intermediate transformation, see e.g.\ \cite{Bae}. 
This notion is also the relevant Markovianity from the perspective of measurement incompatibility, but we will here restrict to the semigroup case.

More specifically, we will consider the situation of Theorem \ref{structure}, with a commutative decoherence-free subalgebra. This has the added feature that the limit system is purely classical (which is not true in general, even when the dynamics exhibits decoherence, see \cite{Deschamps2016}). Given any quantum property in the original system, it is then clear that the limit system will not have this property. The nontrivial question, then, is whether the property is lost \emph{already at some finite point in time} when the system is still non-classical, and find bounds for the this critical time --- this is obviously relevant for practical applications of quantum mechanics in noisy systems. While the loss of quantum entanglement has been studied extensively in this sense for decades, much less is known regarding the loss of measurement incompatibility. Nevertheless, it is clear from the above examples that the problem is tractable at least in specific cases, and we will show below that one can indeed obtain explicit bounds in a generic finite-dimensional case, and improved ones in the case of pure decoherence.

The basic starting point for joint measurability considerations is the Markovian property of joint measurability in the semigroup case: 
\begin{proposition}\label{JMSprop}
Let $t\mapsto\Lambda_t$ be a dynamical semigroup, and $\mathsf E,\,\mathsf F$ observables. If $\Lambda_{t_1}(\mathsf E)$ and $\Lambda_{t_1}(\mathsf F)$ are jointly measurable for some $t_1\geq 0$, then $\Lambda_{t_2}(\mathsf E)$ and $\Lambda_{t_2}(\mathsf F)$ are jointly measurable for all $t_2\geq t_1$.
\end{proposition}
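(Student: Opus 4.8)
The plan is to reduce the statement immediately to Proposition \ref{JMLambdaprop} by exploiting the semigroup law. Fix $t_1\ge 0$ and write an arbitrary $t_2\ge t_1$ as $t_2=t_1+s$ with $s:=t_2-t_1\ge 0$. The semigroup property in the Heisenberg picture, $\Lambda_{t+t'}=\Lambda_{t'}\circ\Lambda_t$, then gives $\Lambda_{t_2}=\Lambda_s\circ\Lambda_{t_1}$. Combining this with the definition $\Lambda(\mathsf E)=\Lambda\circ\mathsf E$ of the evolved observable, one obtains the key identity
\[
\Lambda_{t_2}(\mathsf E)=\Lambda_{t_2}\circ\mathsf E=\Lambda_s\circ\Lambda_{t_1}\circ\mathsf E=\Lambda_s\bigl(\Lambda_{t_1}(\mathsf E)\bigr),
\]
and likewise $\Lambda_{t_2}(\mathsf F)=\Lambda_s\bigl(\Lambda_{t_1}(\mathsf F)\bigr)$.

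The second step is to invoke Proposition \ref{JMLambdaprop} with the channel $\Lambda_s$ and the pair of observables $\Lambda_{t_1}(\mathsf E),\Lambda_{t_1}(\mathsf F)$, which are jointly measurable by hypothesis. Since $\Lambda_s$ is again a (normal unital completely positive) quantum channel — this is part of the standing assumption that $t\mapsto\Lambda_t$ is a dynamical semigroup of channels — Proposition \ref{JMLambdaprop} yields that $\Lambda_s\bigl(\Lambda_{t_1}(\mathsf E)\bigr)$ and $\Lambda_s\bigl(\Lambda_{t_1}(\mathsf F)\bigr)$ are jointly measurable. By the identities above these are exactly $\Lambda_{t_2}(\mathsf E)$ and $\Lambda_{t_2}(\mathsf F)$, which completes the argument.

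There is no real obstacle here; the only point requiring a modicum of care is bookkeeping of the composition order — making sure that $t_2\ge t_1$ is split so that the \emph{extra} evolution $\Lambda_s$ is applied to the already-evolved observables $\Lambda_{t_1}(\mathsf E),\Lambda_{t_1}(\mathsf F)$ rather than the other way around, so that the hypothesis at time $t_1$ is the one fed into Proposition \ref{JMLambdaprop}. One may optionally add a one-line remark that the same reasoning fails for a non-semigroup family $\{\Lambda_t\}$, since then $\Lambda_{t_2}$ need not factor through $\Lambda_{t_1}$ via a channel, which is precisely the point illustrated by Example \ref{ex1}.
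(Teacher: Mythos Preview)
Your proof is correct and follows exactly the same approach as the paper: factor $\Lambda_{t_2}=\Lambda_{t_2-t_1}\circ\Lambda_{t_1}$ via the semigroup law and apply Proposition~\ref{JMLambdaprop} to the channel $\Lambda_{t_2-t_1}$. The paper's version is simply more terse.
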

\begin{proof}
We have $\Lambda_{t_2}(\mathsf E) = \Lambda_{t_2-t_1}(\Lambda_{t_1}(\mathsf E))$ for all $t_1<t_2$, and any observable $\mathsf E$, and hence any pair of observables jointly measurable at time $t_1$ will also be jointly measurable at time $t_2$ by Proposition \ref{JMLambdaprop}.	
\end{proof}

Of course, initially incompatible observable will not necessarily become jointly measurable at some time, or even asymptotically. However, quite often this does happen, and the task of characterising this in some relevant classes of semigroups is an interesting problem. If $\mathcal H$ is finite-dimensional, the problem can in principle be solved numerically, since joint measurability can be checked at any time point using an appropriate semidefinite program \cite{semi}.

\subsection{Basic examples}
We begin with specific cases where the joint measurability problem is analytically tractable.

\begin{example}
Consider qubit dephasing \cite{Lonigro2022A} of the Lindblad form with a single operator $L_1=\sqrt{\gamma}\sigma_3/2$, where $\gamma>0$ is a a constant describing the strength of dephasing. This is the simplest case of pure decoherence discussed in Section \ref{pure1}, and --- interestingly --- has the special property of arising from an environmental interaction of the form \eqref{red} without any approximation. Indeed, taking $\mathcal K =L^2(\mathbb R)$ with initial state $\sigma=|\psi\rangle\langle \psi|$ such that $|\psi(x)|^2=\tfrac{\gamma}{2\pi}[(x-\Omega)^2+\gamma^2/4]^{-1}$, and the unitary evolution $U_t=e^{-it\sigma_3\otimes Q/2}$, one can easily check that the exact reduced dynamics \eqref{red} gives $\Lambda_t(A) = C_t\star A$, where 
$$
C_t = \begin{pmatrix}1 & e^{it\Omega-t\gamma/2} \\ e^{-it\Omega-t\gamma/2} &1\end{pmatrix},
$$
which has the semigroup form $\Lambda_t= e^{t\bL}$ with the generator
$$\bL(A) = \frac i2\Omega [\sigma_3,A] + \frac 14 \gamma (\sigma_3 A\sigma_3 -A).
$$
In this context, the Hamiltonian $\sigma_3\otimes Q/2$ has been called \emph{shallow-pocket model} by several authors, see e.g.\ \cite{Lonigro2022A} and references therein.

Now consider a binary unbiased qubit observable $\Eo=\{E,\id-E\}$ so that $E=\tfrac 12(\id +{\bf e}\cdot \bm\sigma)$ where ${\bf e}\in \mathbb R^3$, $\|{\bf e}\|\leq 1$, and $\bm\sigma=(\sigma_1,\sigma_2,\sigma_3)$. The evolved observable is therefore determined by the effect $\Lambda_t(A)= \frac 12 (\id + {\bf e}(t)\cdot \bm\sigma)$, where $${\bf e}(t) = \big(e^{-\gamma t}R(t){\bf e}_{0}, \, e_z\big),$$
where the rotation matrix $$R(t)=\begin{pmatrix}\cos \Omega t& \sin \Omega t\\ -\sin\Omega t& \cos\Omega t \end{pmatrix}$$
and the damping factor $e^{-\gamma t}$ acts on the $xy$-components ${\bf e}_{0}:=(e_x,e_y)$ of the vector ${\bf e}$, while the $z$-component remains unchanged.

Since initially unbiased observables remain unbiased throughout this evolution, we can use the basic criterion for joint measurability introduced in \cite{Paul}: indeed, two unbiased qubit observables $\Eo$ and $\Fo$ are jointly measurable at time $t$ if and only if 
$$\|{\bf e}(t)+{\bf f}(t)\|+ \|{\bf e}(t)-{\bf f}(t)\|\leq 2, $$
which reads
$$
\sqrt{e^{-2\gamma t}\|{\bf (e+f)}_{0}\|^2+ (e_z+f_z)^2}+ \sqrt{e^{-2\gamma t}\|{\bf (e-f)}_{0}\|^2+ (e_z-f_z)^2}\leq 2.
$$
In particular, the rotation induced by the Hamiltonian is irrelevant for joint measurability, which only depends on the decoherence rate $\gamma$ and the initial effects. The inequality can be written in the slightly more transparent form
$$
({\bf e}_{0}\cdot {\bf f}_{0})^2 e^{-4\gamma t}-\big[\|{\bf e}_0-{\bf f}_0\|^2+2{\bf e}_{0}\cdot {\bf f}_{0}\,(1-e_zf_z)\big]e^{-2\gamma t} + (1-e_z^2)(1-f_z^2)\geq 0,
$$
which is quadratic in $e^{-2\gamma t}$, so the general solution can easily be written down. It is particularly simple if ${\bf e}_0\cdot {\bf f}_0=0$ (e.g.\ for the projections of $\sigma_1$ and $\sigma_2$), in which case the observables are jointly measurable for 
$$
t\geq -\frac {1}{2\gamma}\ln \frac{(1-e_z^2)(1-f_z^2)}{\|{\bf e}_0-{\bf f}_0\|^2},
$$
provided that $(1-e_z^2)(1-f_z^2)\leq \|{\bf e}_0-{\bf f}_0\|^2$ (otherwise they are jointly measurable from the beginning). 
\end{example}

\begin{example}
Depolarising semigroup with invariant state $\sigma$ reads
$$
\Lambda_{t,\sigma}(A) = e^{-t} A +(1-e^{-t}){\rm tr}[A\sigma]\id.
$$
It is easy to check that the GKLS form is given by the Lindblad operators $L_{ik}=\sqrt{p_k}|\phi_k\rangle \langle i|$ where $\sigma=\sum_k p_k |\phi_k\rangle\langle \phi_k|$ is any convex decomposition of $\sigma$ into pure states, and $\{|i\rangle\}$ is a basis. 

This is one of the few cases where joint measurability results have been derived \cite{Heinosaari2015} using existing fairly sophisticated hidden state models.

Assume that $\mathcal H$ is finite-dimensional with dimension $d$. Let $\{\Eo_1,\ldots, \Eo_n\}$ be $n$ observables with arbitrary finite outcome sets $\Omega_1,\ldots,\Omega_n$. Define for all $x_1\in \Omega_1,\ldots,x_n\in \Omega_n$, a positive operator
$$
\mathsf G\big(\{(x_1,\ldots,x_n)\}\big) := d\binom{d+n-1}{n}^{-1}{\rm tr}_1\big[P_n \Eo_1\big(\{x_1\}\big)\otimes \cdots\otimes\Eo_n\big(\{x_n\}\big)P_n\big],
$$ 
where $P_n$ is the projection to the symmetric subspace of the $n$-fold tensor product of $\mathcal H$, and ${\rm tr}_1$ is the partial trace over all but one tensor factor. It follows that $\mathsf G$ is a joint observable for the observables $\{\Lambda_{t,\id/d}(\Eo_1),\ldots, \Lambda_{t,\id/d}(\Eo_n)\}$ when $t=-\ln ((n+d)/(n(d+1)))$, and hence by Proposition \ref{JMSprop} it follows that they are jointly measurable for
$$
t\geq -\ln \frac{n+d}{n(d+1)}.
$$
Furthermore, it was proved in \cite{Heinosaari2015} that $\{\Lambda_{t,\id/d}(\Eo_1),\ldots, \Lambda_{t,\id/d}(\Eo_n)\}$ is jointly measurable for \emph{any} $n$ if
$$
t\geq -\ln \frac{(3d-1)(d-1)^{d-1}}{(d+1)d^d}.
$$
\end{example}

\subsection{Perturbative approach in the generic case}

\newcommand{\Ec}{\mathcal E_{\mathcal C}}
We now proceed to obtain a rough bound for the critical joint measurability time in a generic system with the structure given in Theorem \ref{structure}. We begin the discussion in a general Hilbert space, but for the concrete result we will have to restrict to a finite-dimensional case.

We first consider joint measurability in the vicinity of a commutative subalgebra $\mathcal C\subseteq \lh$. 
Let $\eh$ denote the set of effects, positive unit bounded operators.
and  let $\Ec= \eh\cap \mathcal C$. 
Obviously, all pairs of $\Ec$--valued observables are jointly measurable, but it is also quite natural to ask how close (pairs of) observables need to be to the algebra $\mathcal C$ to be jointly measurable. To formalise this, we set the following general definition:
\begin{definition}
Let $\mathcal U_0\subseteq \eh\times \eh$ be any set with the property that $[A,B]=0$ for all $(A,B)\in \mathcal U_0$. We call a set $\mathcal U\subseteq \mathcal E(\hi)\times\mathcal E(\hi)$ is a \emph{jointly measurable envelope of $\mathcal U_0$} if $\mathcal U_0\subseteq \mathcal U$, and any pair $(\Eo, \Fo)$ of finite-outcome observables such that $(\Eo(i), \Fo(j))\in \mathcal U$ for all $(i,j)\in \Omega_{\Eo}\times \Omega_{\Fo}$ is jointly measurable.\footnote{For notational simplicity we write here and subsequently, for instance,  $\Eo(i)$ instead of $\Eo(\{i\})$.}
\end{definition}
\begin{remark}
Note that trivially, any $\mathcal U_0$ as in the definition is a jointly measurable envelope of itself, since commutative (discrete) 
observables are jointly measurable. The interesting question is how to find jointly measurable envelopes involving non-commutative pairs. We also remark that the definition could be generalised to include arbitrary many observables (not just pairs) in a straightforward fashion; however, we will restrict to pairs in this paper, as mentioned above.
\end{remark}

We will now apply this concept in the particular case $\mathcal U_0=\Ec\times \Ec$ associated with a commutative algebra $\mathcal C$. In order to find a nontrivial jointly measurable envelope, we assume that there exists a normal conditional expectation $\Gamma:\lh\to \mathcal C$. Since $\Gamma$ is positive and normal, we have, for each observable $\Eo$, the ``projected'' observable $\Gamma(\Eo)$ whose range lies in $\mathcal E_{\mathcal C}$. We now have the pointwise distance $\|\Eo(i)-\Gamma(\Eo)(i)\|$ of the effects from their projections as a measure of closeness,\footnote{In the relevant case where the conditional expectation comes with a faithful invariant state $\sigma$, one can actually regard $\Gamma$ as a Hilbert space projection in the completion of $\lh$ with respect to the scalar product $(A,B)\mapsto {\rm tr}[\sigma A^*B]$, in which case $\Gamma(A)$ is indeed the point in $\mathcal C$ closest to $A$ for any $A\in \lh$.} and the following result holds:

\begin{proposition}\label{JMlope}
Let $\mathcal C\subseteq \lh$ be an Abelian von Neumann algebra with a normal conditional expectation $\Gamma:\lh\to \mathcal C$. For any effect $A$ let $\lambda_0(A)$ denote the bottom of the spectrum of $A$. Then the set
$$
\mathcal U_{\mathcal C,\Gamma} := \{(A,B)\in \mathcal E(\hi)\times \mathcal E(\hi)\mid \|A-\Gamma(A)\|+\|B-\Gamma(B)\|\leq \lambda_0(\Gamma(A))\lambda_0(\Gamma(B))\}
$$
is a jointly measurable envelope of $\mathcal E_{\mathcal C}\times \mathcal E_{\mathcal C}$.
\end{proposition}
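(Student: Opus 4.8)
The plan is to verify the two defining requirements of a jointly measurable envelope. The inclusion $\Ec\times\Ec\subseteq\mathcal U_{\mathcal C,\Gamma}$ is immediate: for $A,B\in\Ec$ one has $\Gamma(A)=A$ and $\Gamma(B)=B$ (a conditional expectation restricts to the identity on its range), so that $\|A-\Gamma(A)\|+\|B-\Gamma(B)\|=0\le\lambda_0(A)\lambda_0(B)$, the right-hand side being non-negative since $A,B\ge 0$. The substantive point is the second requirement, and for it the plan is to exhibit an explicit joint observable, realised as a norm-small, marginal-preserving perturbation of the ``commutative product'' of the projected observables.

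So let $\Eo,\Fo$ be finite-outcome observables with $(\Eo(i),\Fo(j))\in\mathcal U_{\mathcal C,\Gamma}$ for all outcomes $i,j$, and write $P_i:=\Gamma(\Eo(i))$, $Q_j:=\Gamma(\Fo(j))$, $N_i:=\Eo(i)-P_i$, $M_j:=\Fo(j)-Q_j$. Since $\Gamma$ is linear, positive and unital with range $\mathcal C$, each $P_i,Q_j$ is an effect lying in the \emph{commutative} algebra $\mathcal C$, so that all the $P_i$ and $Q_j$ commute with one another; moreover $\sum_iP_i=\sum_jQ_j=\id$, $\sum_iN_i=\sum_jM_j=0$, and each $N_i,M_j$ is self-adjoint. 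I would then propose the joint observable
$$
\Go(i,j):=P_iQ_j+\tfrac12\{Q_j,N_i\}+\tfrac12\{P_i,M_j\}.
$$
Each $\Go(i,j)$ is self-adjoint ($P_iQ_j$ is the product of commuting self-adjoint operators, and an anticommutator of self-adjoint operators is self-adjoint); summing over $j$ and using $\sum_jQ_j=\id$, $\sum_jM_j=0$ gives $\sum_j\Go(i,j)=P_i+N_i=\Eo(i)$, and symmetrically $\sum_i\Go(i,j)=\Fo(j)$, whence also $\sum_{i,j}\Go(i,j)=\id$. Thus $\Go$ is a POVM on the product outcome set with marginals $\Eo$ and $\Fo$, i.e.\ a joint observable for $\Eo$ and $\Fo$, provided each $\Go(i,j)$ is positive --- which is the only remaining point.

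Positivity is where the specific shape of $\mathcal U_{\mathcal C,\Gamma}$ comes in, and it follows from two elementary operator estimates. First, since $P_i\ge\lambda_0(P_i)\id$ and $Q_j\ge\lambda_0(Q_j)\id$ commute, $P_iQ_j=P_i^{1/2}Q_jP_i^{1/2}\ge\lambda_0(Q_j)P_i\ge\lambda_0(P_i)\lambda_0(Q_j)\id$. Second, since $Q_j$ is an effect (so $\|Q_j\|\le 1$) and $N_i=N_i^*$, we have $\|\tfrac12\{Q_j,N_i\}\|\le\|Q_j\|\,\|N_i\|\le\|N_i\|$, hence $\tfrac12\{Q_j,N_i\}\ge-\|N_i\|\id$, and likewise $\tfrac12\{P_i,M_j\}\ge-\|M_j\|\id$. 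Adding these bounds,
$$
\Go(i,j)\ \ge\ \big(\lambda_0(P_i)\lambda_0(Q_j)-\|N_i\|-\|M_j\|\big)\id ,
$$
and the right-hand side is non-negative precisely because $(\Eo(i),\Fo(j))\in\mathcal U_{\mathcal C,\Gamma}$ unwinds to $\|N_i\|+\|M_j\|=\|\Eo(i)-\Gamma(\Eo(i))\|+\|\Fo(j)-\Gamma(\Fo(j))\|\le\lambda_0(P_i)\lambda_0(Q_j)$. This gives $\Go(i,j)\ge 0$ and finishes the argument.

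The one genuinely non-routine step, and hence the main obstacle, is guessing the perturbation $\Go$: one needs a self-adjoint correction to $P_iQ_j$ that repairs \emph{both} marginals at once, yet is bounded in operator norm by the deviations $\|N_i\|,\|M_j\|$, so that the spectral gap $\lambda_0(P_i)\lambda_0(Q_j)$ of the leading term can absorb it. The symmetrised terms $\tfrac12\{Q_j,N_i\}$ and $\tfrac12\{P_i,M_j\}$ do exactly this: the relations $\sum_jQ_j=\id$, $\sum_jM_j=0$ (together with their $i$-analogues) make the marginal sums telescope to $N_i$ and $0$ respectively. Everything else is bookkeeping; in particular, normality and atomicity of $\mathcal C$, and the bimodule identity for $\Gamma$, are not used here --- only that $\Gamma$ is linear, positive, unital and has commutative range.
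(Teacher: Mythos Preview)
Your proof is correct, and the overall strategy coincides with the paper's: perturb the commutative product $\Gamma(\Eo(i))\Gamma(\Fo(j))$ by correction terms that restore the marginals, then use the spectral lower bound $P_iQ_j\ge\lambda_0(P_i)\lambda_0(Q_j)\id$ (via commutativity in $\mathcal C$) to absorb the norms of the corrections. The difference lies in the choice of joint observable. The paper takes
\[
\mathsf G(i,j) = \Gamma(\Eo)(i)\Gamma(\Fo)(j) + \big(\Eo(i)-\Gamma(\Eo)(i)\big) + \big(\Fo(j)-\Gamma(\Fo)(j)\big),
\]
i.e.\ $P_iQ_j+N_i+M_j$ in your notation, whereas you use the symmetrised form $P_iQ_j+\tfrac12\{Q_j,N_i\}+\tfrac12\{P_i,M_j\}$. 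Read literally, the paper's candidate does \emph{not} have the stated marginals: summing over $j$ yields $P_i+|\Omega_\Fo|\,N_i$, not $P_i+N_i=\Eo(i)$. Your anticommutator terms repair this, since $\sum_j\tfrac12\{Q_j,N_i\}=\tfrac12\{\id,N_i\}=N_i$ while $\sum_j\tfrac12\{P_i,M_j\}=0$, and the norm bound $\|\tfrac12\{Q_j,N_i\}\|\le\|N_i\|$ keeps the positivity estimate unchanged. (An alternative minimal fix of the paper's formula is $P_iQ_j+|\Omega_\Fo|^{-1}N_i+|\Omega_\Eo|^{-1}M_j$, which also works and even tightens the positivity margin; your version has the advantage of being independent of the outcome-set sizes.) In short, your construction is a genuine --- and necessary --- refinement of the paper's ansatz, with the positivity argument otherwise identical.
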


\begin{proof}
If $A,\,B\in \mathcal E_{\mathcal C}$, we have $A=\Gamma(A)$ and $B=\Gamma(B)$, and since the right-hand side of the inequality is always nonnegative (as $\Gamma(A),$ $\Gamma(B)\in \mathcal E(\hi)$), it follows that $\mathcal E_{\mathcal C}\times \mathcal E_{\mathcal C}\subseteq{ \mathcal U_{\mathcal C,\Gamma}}$. Now let $\Eo=(\Eo(i))_{i\in \Omega_{\Eo}}$ and $\Fo=(\Fo(j))_{j\in \Omega_{\Fo}}$ be observables with finite outcome sets such that $(\Eo(i), \Fo(j))\in {\mathcal U_{\mathcal C,\Gamma}}$ for each $i,\,j$. For each $i,\,j$, define $$\mathsf G(i,j) := \Gamma(\Eo)(i)\Gamma(\Fo)(j)+ \Eo(i)-\Gamma(\Eo)(i)+ \Fo(j)-\Gamma(\Fo)(j).$$
Now clearly $\sum_j \mathsf G(i,j)=\Eo(i)$ for all $i\in \Omega_{\Eo}$ and $\sum_i \mathsf G(i,j)=\Fo(j)$ for all $j\in \Omega_{\Fo}$, so $\mathsf G(i,j)$ defines an operator measure with margins $\Eo$ and $\Fo$. Since $\Gamma(\Eo)(i)$ and $\Gamma(\Fo)(j)$ commute, we have $\Gamma(\Eo)(i)\Gamma(\Fo)(j) \geq \lambda_0(\Gamma(\Eo)(i))\lambda_0(\Gamma(\Fo)(j))\id$, and hence
\begin{align*}
\mathsf G(i,j)&\geq \lambda_0(\Gamma(\Eo)(i))\lambda_0(\Gamma(\Fo)(j))\id-\|\Eo(i)-\Gamma(\Eo)(i)\|\id-\|\Fo(j)-\Gamma(\Fo)(j)\|\id\geq 0
\end{align*}
because $(\Eo(i), \Fo(j))\in \mathcal U_{\mathcal C,\Gamma}$. Therefore, $\mathsf G$ is a joint observable for the pair $(\Eo,\Fo)$.
\end{proof}

Now consider a semigroup $\Lambda_t$ as in Theorem \ref{structure}, using the notation associated with the decomposition introduced there. For a given effect $A$, we have $\lambda_0(\Gamma(A)) = \inf_{i\in I}{\rm tr}[\sigma_iP_iAP_i]$. Note that since each $\sigma_i$ is faithful, we have ${\rm tr}[\sigma_iP_iAP_i]>0$ for an $i\in I$, provided that $P_iAP_i\neq 0$, but even in the case where $P_iAP_i\neq 0$ for all $i\in I$ it may happen that the infimum is zero. In that case $(A,B)\in \mathcal U_{\mathcal C,\Gamma}$ (for any $B$) only if $A,B\in \mathcal E_{\mathcal C}$. However, in the finite-dimensional case (where $I$ is necessarily finite), $\lambda_0(\Gamma(A))>0$ is a generic property.

The general idea is to show that the effects of a pair of observables $(\Lambda_t(\Eo)$, $\Lambda_t(\Fo))$ enter some jointly measurable envelope of $\Ec\times \Ec$ already at a finite time $t$. In the infinite-dimensional case, we would not expect this as the convergence in part (c) of Theorem \ref{structure} is (only) in the weak-* topology. Therefore, we assume henceforth that $\mathcal H$ is finite-dimensional (which is already nontrivial). In this case, estimates of the stronger form
\begin{equation}\label{Lambdabound}
\|\Lambda_t - \Gamma\circ \Lambda_t\| \leq K e^{-\gamma t}
\end{equation}
where $K,$ $\gamma>0$ are independent of time $t$, have been obtained by various methods; see e.g.\ references cited in \cite{Hanson2020}, in which they were applied to derive bounds for times at which the dynamics becomes entanglement-breaking. Before proceeding to our result, we briefly review this aspect from the point of the view on joint measurability.

Accordingly, we recall that a quantum channel $\Lambda$ is \emph{entanglement-breaking} if $(\Lambda\otimes {\rm Id})(\rho)$ is separable for any bipartite state $\rho$ involving an ancillary system, and that any entanglement-breaking channel in finite-dimensional case is of the form
$$
\Lambda(A) = \sum_{k\in \Omega} {\rm tr}[A\sigma_k] R_k
$$
where each $\sigma_k$ is a state, and $(R_k)_{k\in \Omega}$ is a resolution of identity \cite{horo}. From our point of view, such channels are relevant as they are also \emph{incompatibility-breaking} \cite{Heinosaari2015}, so that, in particular, the pair $(\Lambda(\Eo), \Lambda(\Fo))$ is jointly measurable for \emph{any} pair $(\Eo,\Fo)$ of observables. Indeed, we can obtain any observable of the form $\Lambda(\Eo)$ from $(R_k)_{k\in \Omega}$ by classical postprocessing via Markov kernel $(i,k)\mapsto{\rm tr}[\Eo(i)\sigma_k]$; this property is known to be equivalent to joint measurability in the sense of our definition \cite[Theorem 11.1]{QM}.

In the setting of Theorem \ref{structure}, we immediately observe that the conditional expectation is an entanglement-breaking channel, which agrees with the fact that the algebra $\mathcal C$ is commutative, i.e.\ does not support any incompatible observables. In particular, for any pair of observables $(\Eo,\Fo)$, we can write a joint observable for the pair $(\Gamma(\Eo),\Gamma(\Fo))$ as $\mathsf G(i,j) = \Gamma(\Eo(i))\Gamma(\Fo(j))$ (as the range of $\Gamma$ is commutative), which, using the decomposition in Theorem \ref{structure}, attains the explicit form
$$
\mathsf G(i,j)= \sum_{k\in I} {\rm tr}[\Eo(i)\sigma_k] {\rm tr}[\Fo(j)\sigma_k]P_k,
$$
which is manifestly a postprocessing of the central resolution of identity $(P_k)$ providing a joint observable for all $\mathcal C$--valued observables.

Now, the next question is whether joint measurability could be achieved already at some finite time for a given pair of observables --- this would show that the emergence of classicality, from the point of view of individual pairs of observables, has a practical relevance.

It makes sense to first consider whether $\Lambda_t$ might become entanglement-breaking at some finite time $t$ (as this would obviously imply that it is also incompatibility breaking) for \emph{any} initial pair of observables. The following result is known:
\begin{proposition}[\cite{Hanson2020}] Adopt the assumptions of Theorem \ref{structure} and assume that $\mathcal H$ is $d$--dimensional ($d<\infty$). Then the following hold:
\begin{itemize}
\item[(a)] The semigroup $(\Lambda_t)$ has a finite entanglement-breaking time if and only if the decoherence-free algebra consists exactly of multiples of identity (i.e.\ the index set $I$ has exactly one element). In this case, if \eqref{Lambdabound} holds, then $\Lambda_t$ is entanglement-breaking (and hence also incompatibility breaking) for
$$
t\geq -\frac 1\gamma\ln \frac{\lambda_0(\sigma)}{Kd^{\frac 32}},
$$
where $\sigma$ is the unique faithful invariant state.
\item[(b)] If the algebra $\mathcal C$ is nontrivial ($I$ has more than one element), then there is no finite $t\geq 0$ such that $\Lambda_t$ is entanglement-breaking. 
\end{itemize}
\end{proposition}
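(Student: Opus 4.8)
The plan is to prove the two parts together, noting that part (b) is exactly the ``only if'' half of the equivalence in part (a); so it suffices to establish (b) and the ``if'' direction of (a) with its bound. Throughout I would use the block decomposition $\hi=\oplus_{n\in I}\hi_n$, $\hi_n={\rm ran}\,P_n$, the fact from Theorem \ref{structure}(a) that $H$ and every $L_l$ commute with all the $P_n$, and the two standard facts that entanglement-breaking is shared by a channel and its predual and that a channel is entanglement-breaking iff its Choi operator is separable.

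\emph{For (b) (hence necessity in (a)).} Assuming $|I|\geq 2$, I would fix distinct minimal projections $P_1,P_2$ of $\mathcal C$ and unit vectors $\phi_1\in\hi_1$, $\phi_2\in\hi_2$. The first step is to observe that, since $H$ and all $L_l$ commute with $P_1,P_2$, the Schr\"odinger generator $\tilde\bL$ maps the corner $P_1\th P_2$ into itself, hence so does $V_t=e^{t\tilde\bL}$; since $V_t^{-1}=e^{-t\tilde\bL}$ preserves the corner too, $V_t$ restricts to an injection on $P_1\th P_2$, so $V_t(\kb{\phi_1}{\phi_2})\neq0$ for every $t\geq0$. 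The second step is a partial-transpose computation: with a qubit ancilla $\{\ket1,\ket2\}$ and $\ket\Psi=\tfrac1{\sqrt2}(\ket{\phi_1}\ket1+\ket{\phi_2}\ket2)$, the partial transpose over the ancilla of $\rho_t:=(V_t\otimes{\rm Id})(\kb\Psi\Psi)$ is the $2\times2$ block matrix $\tfrac12\bigl(\begin{smallmatrix}M_{11}&M_{12}^*\\ M_{12}&M_{22}\end{smallmatrix}\bigr)$, $M_{ij}:=V_t(\kb{\phi_i}{\phi_j})$; since $V_t$ respects the block structure, $M_{11},M_{22}$ are supported in $\hi_1,\hi_2$ and $M_{12}$ maps $\hi_2$ into $\hi_1$, so a basis reordering shows $(\hi_2\otimes\ket1)\oplus(\hi_1\otimes\ket2)$ is invariant and carries $\tfrac12\bigl(\begin{smallmatrix}0&M_{12}^*\\ M_{12}&0\end{smallmatrix}\bigr)$, with nonzero eigenvalues $\pm\tfrac12 s$, $s$ the singular values of $M_{12}\neq0$. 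So this partial transpose has a negative eigenvalue, $\rho_t$ is entangled, and $V_t$ (hence $\Lambda_t$) is not entanglement-breaking for any finite $t\geq0$.

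\emph{For sufficiency and the bound in (a).} When $I$ is a singleton, $\mathcal C=\C\id$, $\sigma$ is the unique (faithful) invariant state, $\Gamma(A)=\tr{\sigma A}\id$, and $\Gamma\circ\Lambda_t=\Gamma$ (using $V_t(\sigma)=\sigma$), so \eqref{Lambdabound} becomes $\|\Lambda_t-\Gamma\|\leq Ke^{-\gamma t}$ and, dually, $\|V_t-V_\infty\|_{1\to1}\leq Ke^{-\gamma t}$ with $V_\infty(\rho)=\tr\rho\,\sigma$. I would then pass to Choi operators $\tau_t:=(V_t\otimes{\rm Id})\bigl(\tfrac1d\kb\om\om\bigr)$, $\ket\om=\sum_{i=1}^d\ket i\ket i$, for which $\tau_\infty=\sigma\otimes\tfrac1d\id$ is a full-rank product state, hence interior to the separable states; a Gurvits--Barnum type estimate quantifies this as a separable trace-norm ball around $\sigma\otimes\tfrac1d\id$ of radius of order $\lambda_0(\sigma)\,d^{-1/2}$. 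On the other hand, $V_t-V_\infty$ is a difference of channels, so a Paulsen-type bound controls its completely bounded norm by $d\,\|V_t-V_\infty\|_{1\to1}\leq dKe^{-\gamma t}$, whence $\|\tau_t-\tau_\infty\|_1\leq dKe^{-\gamma t}$. Thus $\tau_t$ is separable --- i.e.\ $V_t$, hence $\Lambda_t$, is entanglement-breaking --- once $dKe^{-\gamma t}\leq\lambda_0(\sigma)\,d^{-1/2}$, that is for $t\geq-\tfrac1\gamma\ln\!\bigl(\lambda_0(\sigma)/(Kd^{3/2})\bigr)$; this being finite gives the ``if'' direction, and with (b) the equivalence.

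The main obstacle will be the dimensional bookkeeping in the sufficiency direction of (a): one has to identify the exact form of the Gurvits--Barnum separable-ball estimate around $\sigma\otimes\tfrac1d\id$, the precise relation between the $1\to1$-norm and the completely bounded norm of the Hermiticity-preserving map $V_t-V_\infty$, and the normalisation of $\kb\om\om$, so that all powers of $d$ combine to give exactly $-\gamma^{-1}\ln(\lambda_0(\sigma)/(Kd^{3/2}))$. By comparison, the block-invariance of $\tilde\bL$ on $P_1\th P_2$ and the partial-transpose spectrum in (b) are routine.
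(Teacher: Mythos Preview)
The paper does not prove this proposition: it is quoted as a known result from \cite{Hanson2020}, stated without proof in order to set up the contrast with the paper's own subsequent proposition on incompatibility-breaking for regular pairs of observables. There is therefore no proof in the paper to compare your proposal against.

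On its own merits your sketch is sound. Part (b) is essentially complete: the block-diagonality of $H$ and the $L_l$ from Theorem~\ref{structure}(a) does make each off-diagonal corner $P_n\th P_m$ invariant under $\tilde\bL$, the exponential $e^{t\tilde\bL}$ is invertible on this finite-dimensional corner so $M_{12}\neq 0$, and the partial-transpose spectrum computation is routine. For the sufficiency bound in (a), the Choi-state/separable-ball strategy is the natural one; you have correctly flagged that the only non-routine step is the dimensional bookkeeping --- the original Gurvits--Barnum ball is centred at the maximally mixed state, so one needs a version around $\sigma\otimes\tfrac1d\id$ with radius scaling as $\lambda_0(\sigma)d^{-1/2}$, together with the precise $1{\to}1$-to-diamond-norm factor, to land on exactly $Kd^{3/2}$ in the bound.
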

This opens up several interesting directions for joint measurability considerations, which we now proceed to initiate. In particular, while part (a) is also a joint measurability result, part (b) is not. We will now provide a simple result showing that incompatibility of a generic pair of finite-outcome observables is indeed broken at some finite time also in this scenario. We call a finite-outcome observable $\Eo$ \emph{regular relative to $\mathcal C$} if $P_k\Eo(i)P_k\neq 0$ for all $k\in I$ and $i\in \Omega_{\Eo}$ (this is the relevant notion of genericity). In this case, we define $\lambda_{\Eo}=\min_{i,k} {\rm tr}[\sigma_k P_k\Eo(i)P_k]>0$. 

\begin{proposition} Adopt again the assumptions of Theorem \ref{structure}. Suppose that $\hi$ is finite-dimensional and \eqref{Lambdabound} holds for $\Lambda_t$. Let $(\Eo, \Fo)$ be a pair of finite-outcome observables which are regular relative to $\mathcal C$. Then $(\Lambda_t(\Eo),\Lambda_t(\Fo))$ is jointly measurable for $$t\geq -\frac{1}{\gamma}\ln \frac {\lambda_{\Eo}\lambda_{\Fo}}{2K}.$$ 
\end{proposition}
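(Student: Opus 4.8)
The plan is to show that for every $t$ at least the stated value, the pair of finite-outcome observables $\big(\Lambda_t(\Eo),\Lambda_t(\Fo)\big)$ — which are genuine observables with the same outcome sets $\Omega_{\Eo},\Omega_{\Fo}$, since $\Lambda_t$ is a channel — has all of its pairs of effects $\big(\Lambda_t(\Eo(i)),\Lambda_t(\Fo(j))\big)$ inside the jointly measurable envelope $\mathcal U_{\mathcal C,\Gamma}$ of $\Ec\times\Ec$ provided by Proposition \ref{JMlope}; joint measurability of $\big(\Lambda_t(\Eo),\Lambda_t(\Fo)\big)$ then follows at once from the defining property of a jointly measurable envelope. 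Thus, writing $A=\Lambda_t(\Eo(i))$ and $B=\Lambda_t(\Fo(j))$ (these are effects, as $\Lambda_t$ is positive and unital), I need to verify the membership condition $\no{A-\Gamma(A)}+\no{B-\Gamma(B)}\le\lambda_0(\Gamma(A))\,\lambda_0(\Gamma(B))$ of Proposition \ref{JMlope} for all $i\in\Omega_{\Eo}$, $j\in\Omega_{\Fo}$. The task therefore splits into an upper bound for the left-hand side and a lower bound for the right-hand side.

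For the left-hand side I would just observe that $A-\Gamma(A)=({\rm Id}-\Gamma)\big(\Lambda_t(\Eo(i))\big)=(\Lambda_t-\Gamma\circ\Lambda_t)\big(\Eo(i)\big)$, and that $\Eo(i)$ is an effect so $\no{\Eo(i)}\le 1$; hence by the decay estimate \eqref{Lambdabound} we get $\no{A-\Gamma(A)}\le\no{\Lambda_t-\Gamma\circ\Lambda_t}\le Ke^{-\gamma t}$, and likewise $\no{B-\Gamma(B)}\le Ke^{-\gamma t}$, so the left-hand side is at most $2Ke^{-\gamma t}$.

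For the right-hand side, recall that $\Gamma(A)=\oplus_{k\in I}\tr{\sigma_kP_kAP_k}P_k$, so its bottom eigenvalue equals $\min_{k\in I}\tr{\sigma_kP_kAP_k}$, and it suffices to show $\tr{\sigma_kP_k\Lambda_t(\Eo(i))P_k}\ge\lambda_{\Eo}$ for every $k$. This is the one step that needs a little care. I would use the block decomposition of Theorem \ref{structure}(a): writing a general $Z\in\lh$ as $Z=\sum_{m,m'}P_mZP_{m'}$, the GKLS generator \eqref{GSKL} built from $H=\oplus_kK^{(k)}$ and $L_l=\oplus_kL_l^{(k)}$ acts ``entrywise'' with respect to this decomposition, so in particular the diagonal block $P_k\bL(Z)P_k$ depends only on $P_kZP_k$ and equals $\bL^{(k)}(P_kZP_k)$, where $\bL^{(k)}$ is the GKLS generator on $P_k\lh P_k$ determined by $K^{(k)}$ and $(L_l^{(k)})_l$. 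Exponentiating gives $P_k\Lambda_t(Z)P_k=\Lambda_t^{(k)}(P_kZP_k)$ with $\Lambda_t^{(k)}=e^{t\bL^{(k)}}$ the restricted semigroup, and since $\sigma_k$ is an invariant state of $\Lambda_t^{(k)}$ by Theorem \ref{structure}(b) we obtain $\tr{\sigma_kP_k\Lambda_t(\Eo(i))P_k}=\tr{\sigma_k\Lambda_t^{(k)}(P_k\Eo(i)P_k)}=\tr{\sigma_kP_k\Eo(i)P_k}\ge\lambda_{\Eo}$, the last inequality being the definition of $\lambda_{\Eo}$, which is strictly positive precisely because $\Eo$ is regular relative to $\mathcal C$ and each $\sigma_k$ is faithful. (Alternatively, one may argue more quickly from $\Gamma\circ\Lambda_t=\Lambda_t\circ\Gamma$ together with $\Lambda_t(P_k)=P_k$ — itself a consequence of the same block structure, or of invariance and faithfulness of $\sigma$ — to get $\Gamma(\Lambda_t(\Eo(i)))=\Gamma(\Eo(i))$ and hence $\lambda_0(\Gamma(\Lambda_t(\Eo(i))))\ge\lambda_{\Eo}$ directly.) The same bound holds with $\Fo$ in place of $\Eo$, so the right-hand side of the membership condition is at least $\lambda_{\Eo}\lambda_{\Fo}$.

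Combining the two estimates, the membership condition holds as soon as $2Ke^{-\gamma t}\le\lambda_{\Eo}\lambda_{\Fo}$, i.e.\ whenever $t\ge-\frac1\gamma\ln\frac{\lambda_{\Eo}\lambda_{\Fo}}{2K}$, which is exactly the asserted threshold, and Proposition \ref{JMlope} then delivers the joint measurability of $\big(\Lambda_t(\Eo),\Lambda_t(\Fo)\big)$. I do not expect a serious obstacle here: the only genuinely nontrivial ingredient is the block-diagonal-plus-invariance argument behind the lower bound $\lambda_0(\Gamma(\Lambda_t(\Eo(i))))\ge\lambda_{\Eo}$, and even that is essentially bookkeeping on top of Theorem \ref{structure}; everything else is immediate from \eqref{Lambdabound} and Proposition \ref{JMlope}.
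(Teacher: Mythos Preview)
Your proof is correct and follows essentially the same route as the paper: bound the left-hand side of the envelope condition by $2Ke^{-\gamma t}$ via \eqref{Lambdabound}, bound the right-hand side below by $\lambda_{\Eo}\lambda_{\Fo}$ via regularity, and conclude through Proposition~\ref{JMlope}. In fact you are \emph{more} careful than the paper on one point: the envelope condition requires $\lambda_0\big(\Gamma(\Lambda_t(\Eo(i)))\big)$, not $\lambda_0\big(\Gamma(\Eo(i))\big)$, and the paper silently identifies the two, whereas you supply the missing justification (either via the block decomposition of Theorem~\ref{structure}(a) and invariance of $\sigma_k$, or via $\Gamma\circ\Lambda_t=\Lambda_t\circ\Gamma$ together with $\Lambda_t|_{\mathcal C}={\rm Id}$) to obtain $\Gamma(\Lambda_t(\Eo(i)))=\Gamma(\Eo(i))$.
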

\begin{proof}
The assumption \eqref{Lambdabound} implies that $\|\Lambda_t(\Eo(i))-\Gamma(\Lambda_t(\Eo))(i)\|\leq K e^{-\gamma t}\|\Eo(i)\|\leq Ke^{-\gamma t}$ for all $t\geq 0$ and $i\in \Omega_{\Eo}$, and a similar relation holds for $\Fo$. Therefore, if the inequality in the claim holds, we have
\begin{align*}
&\|\Lambda_t(\Eo(i))-\Gamma(\Lambda_t(\Eo))(i)\| + \|\Lambda_t(\Fo(j))-\Gamma(\Lambda_t(\Fo))(j)\|\leq 2 Ke^{-\gamma t}\leq \lambda_{\Eo}\lambda_{\Fo}\\
&\leq \min_{k\in I} {\rm tr}[\sigma_k P_k\Eo(i)P_k] \min_{k\in I}{\rm tr}[\sigma_k P_k\Fo(j)P_k] = \lambda_0(\Gamma(\Eo)(i)) \lambda_0(\Gamma(\Fo)(j))
\end{align*}
for each $i$ and $j$, which shows that each pair $(\Eo(i),\Fo(j))$ belongs to the jointly measurable envelope $\mathcal U_{\mathcal C, \Gamma}$, and the result follows.
\end{proof}

\begin{remark} It is clear that this result would generalise in a straightforward way to joint measurability of $n$--tuples of observables.
\end{remark}

This bound is obviously far from optimal in most specific cases, and can indeed be improved significantly when more information on the structure of the semigroup is available. Since the aim here is mainly expository, we only consider one class of semigroups and observables.

\subsection{The case of pure decoherence}\label{pure}

We now proceed to consider the joint measurability problem in the special case of Section \ref{pure1} where $\mathcal C ={\rm span}\{|n\rangle\langle n| \mid n\in I\}$ in some fixed basis $\{|n\rangle\}\subset\hi$.

For an observable $\Eo$ we then have $\lambda_{\Eo}=\min_{n,j} p_n^{\Eo}(j)$, where we have denoted $p_n^{\Eo}(j)=\langle n|\Eo(j)|n\rangle$, so that $p_n^{\Eo}$ is the probability distribution of $\Eo$ on the basis state $n\in I$.

A tractable and interesting setting for joint measurability considerations is now the following \cite{Kiukas2022}: one of the observables, say, $\Fo$, is $\mathcal C$-valued, so that each effect $\Fo(j)$ is already in the decoherence-free subalgebra (i.e.\ diagonal in the basis $\{|n\rangle\}$). It then follows that $\Fo = \Gamma(\Fo)$. Hence, Proposition \ref{JMlope} applies here: Assuming $\Eo$ and $\Fo$ are regular, their joint measurability is implied by the inequality
\begin{equation}\label{generic2}
\max_i \frac{\|\Eo(i)-\Gamma(\Eo(i))\|}{\min_n p_n^\Eo(i)}\leq \lambda_\Fo  
\end{equation}
However, this result can be improved by using the simplified matrix structure of the observables. First of all, instead of considering the distance $\|\Eo(i)-\Gamma(\Eo(i))\|$ between the individual effects, we use the $\ell^1$--distances of the individual matrix elements, by defining for each $n,\,m$ the \emph{coherence}
$${\rm coh}_{nm}(\Eo):=\sum_i|\<n|\Eo(i)|m\>|.$$
For $n=m$ this just equals 1, while $n\neq m$ produces the relevant quantities. The consideration of the second (diagonal) observable $\Fo$ is more interesting, as joint measurability turns out to depend crucially on how distinct the probability measures $p_n^\Eo$ are. In fact, the following result holds:

\begin{proposition}{\cite[Prop.\ 1]{Kiukas2022}}
If $\Eo$ and $\Fo$ are jointly measurable, then  the following tradeoff holds for all $n,m$:
\begin{equation}\label{tradeoff}
{\rm coh}_{nm}(\Eo)+d^2_{nm}(\Fo)\leq 1,
\end{equation}
where $d^2_{nm}(\Fo):=1-\sum_j\sqrt{p^\Fo_n(j)p^\Fo_m(j)}$ is the squared Hellinger distance between the probability distributions $p^\Fo_n$ and $p^\Fo_m$.
\end{proposition}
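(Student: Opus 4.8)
The plan is to exploit the fact that a joint observable $\mathsf G$ for $(\Eo,\Fo)$ can be ``read out'' in each basis state $|n\rangle$, producing a joint probability table whose marginals are $p_n^{\Eo}$ and $p_n^{\Fo}$, and then to compare two such readouts at $n$ and $m$. Concretely, suppose $\mathsf G$ is a joint observable, so $\sum_j\mathsf G(i,j)=\Eo(i)$ and $\sum_i\mathsf G(i,j)=\Fo(j)$. Since $\Fo$ is $\mathcal C$-valued, $\Fo(j)$ is diagonal, so $\langle n|\Fo(j)|m\rangle = p_n^{\Fo}(j)\,\delta_{nm}$; the off-diagonal freedom therefore sits entirely in $\Eo$ and in $\mathsf G$. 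First I would set $g_{nm}(i,j):=\langle n|\mathsf G(i,j)|m\rangle$ and record the three constraints: (i) $\sum_j g_{nm}(i,j)=\langle n|\Eo(i)|m\rangle$; (ii) $\sum_i g_{nn}(i,j)=p_n^{\Fo}(j)$ and $\sum_i g_{nm}(i,j)=0$ for $n\neq m$; (iii) positivity of $\mathsf G(i,j)$, which forces the $2\times 2$ principal submatrix inequality $|g_{nm}(i,j)|^2\leq g_{nn}(i,j)\,g_{mm}(i,j)$ for each fixed $(i,j)$.

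Next I would estimate ${\rm coh}_{nm}(\Eo)$ from above using (i) and (iii). We have
\[
{\rm coh}_{nm}(\Eo)=\sum_i\Big|\sum_j g_{nm}(i,j)\Big|\leq \sum_{i,j}|g_{nm}(i,j)|\leq \sum_{i,j}\sqrt{g_{nn}(i,j)}\sqrt{g_{mm}(i,j)}.
\]
Then apply Cauchy--Schwarz in the variable $j$ (for each fixed $i$), or more cleanly directly in $(i,j)$, to peel off the $i$-sum: by Cauchy--Schwarz over $(i,j)$,
\[
\sum_{i,j}\sqrt{g_{nn}(i,j)\,g_{mm}(i,j)}\leq \sqrt{\Big(\sum_{i,j}g_{nn}(i,j)\Big)\Big(\sum_{i,j}g_{mm}(i,j)\Big)}=\sqrt{1\cdot 1}=1,
\]
using $\sum_{i,j}g_{nn}(i,j)=\langle n|\sum_i\Eo(i)|n\rangle=1$. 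This already gives ${\rm coh}_{nm}(\Eo)\leq 1$, but it is too lossy; the point is that I must not sum over $j$ blindly but instead keep the Hellinger term. The refinement is to split: for the diagonal entries $g_{nn}(i,j)$ we know from (ii) that $\sum_i g_{nn}(i,j)=p_n^{\Fo}(j)$, a genuine probability distribution in $j$. So I would instead bound, for each fixed $j$,
\[
\sum_i\sqrt{g_{nn}(i,j)\,g_{mm}(i,j)}\leq \sqrt{\Big(\sum_i g_{nn}(i,j)\Big)\Big(\sum_i g_{mm}(i,j)\Big)}=\sqrt{p_n^{\Fo}(j)\,p_m^{\Fo}(j)},
\]
again by Cauchy--Schwarz in $i$; then summing over $j$ yields ${\rm coh}_{nm}(\Eo)\leq \sum_j\sqrt{p_n^{\Fo}(j)p_m^{\Fo}(j)}=1-d^2_{nm}(\Fo)$, which is exactly \eqref{tradeoff}.

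The main obstacle I anticipate is ensuring the two applications of Cauchy--Schwarz are being applied in the correct index and that the constraints (ii) are used with the right sign in the off-diagonal case: for $n\neq m$ the diagonal readout distributions $g_{nn}(\cdot,\cdot)$ and $g_{mm}(\cdot,\cdot)$ still sum (over $i$, fixed $j$) to the genuine probabilities $p_n^{\Fo}(j)$ and $p_m^{\Fo}(j)$, so the estimate goes through uniformly in $n,m$; the vanishing constraint $\sum_i g_{nm}(i,j)=0$ is not needed, since we already dropped to absolute values at the first step. A secondary technical point is that $\mathsf G(i,j)$ positivity only directly gives the $2\times 2$ submatrix condition, which is all we use; no further structure of $\mathsf G$ is required. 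One should also check the boundary: when $\Eo$ is itself diagonal, ${\rm coh}_{nm}(\Eo)=\delta_{nm}\cdot(\text{stuff})$ reduces appropriately and \eqref{tradeoff} is consistent with $d_{nm}^2(\Fo)\le 1$. With these points handled, the argument is short and the tradeoff \eqref{tradeoff} follows.
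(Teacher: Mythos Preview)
Your argument is correct: the chain
\[
{\rm coh}_{nm}(\Eo)=\sum_i\Big|\sum_j g_{nm}(i,j)\Big|\leq \sum_{i,j}|g_{nm}(i,j)|\leq \sum_{i,j}\sqrt{g_{nn}(i,j)\,g_{mm}(i,j)}\leq \sum_j\sqrt{p_n^{\Fo}(j)\,p_m^{\Fo}(j)}
\]
is valid, using the triangle inequality, the $2\times 2$ positivity bound, and Cauchy--Schwarz over $i$ for each fixed $j$; the last expression equals $1-d_{nm}^2(\Fo)$, giving \eqref{tradeoff}. Note that the present paper does not supply its own proof of this proposition but merely cites \cite[Prop.~1]{Kiukas2022}, so there is nothing here to compare against; your approach is the natural one and is essentially the argument one would expect in the original reference.
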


This simple but in some sense remarkable inequality can be seen as an ``uncertainty relation'' between the ability of $\Fo$ to distinguish the basis elements, and the ability of $\Eo$ to detect coherences between them. This is, of course, an instance of the fundamental uncertainty principle between complementary observables in quantum mechanics. In fact, the role of complementarity in the present context can be made more explicit by considering mutually unbiased bases (which are also maximally coherent observables). This link to complementarity illustrates the interesting and intricate nature of joint measurability in the neighbourhood of commutative subalgebras.

We can also improve the generic sufficient condition \label{generic2} using the Hellinger distances. In fact:

\begin{proposition}{\cite[Prop.\ 2]{Kiukas2022}}\label{SDP1}
If each matrix $M(i)$ with the matrix elements
$$
M_{nm}(i)=\frac{\<n|\Eo(i)|m\>}{1-d^2_{nm}(\Fo)}
$$
is positive semidefinite, then $\Eo$ and $\Fo$ are jointly measurable (assuming that $d_{nm}(\Fo)\neq 1$ for all $n\neq m$).
\end{proposition}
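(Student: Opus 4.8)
The plan is to write down an explicit joint observable $\mathsf G$ for the pair $(\Eo,\Fo)$ directly in terms of the matrices $M(i)$ and the distributions $p^\Fo_n$, exploiting that $\Fo$ is diagonal, so that $\Fo(j)=\sum_n p^\Fo_n(j)\,|n\rangle\langle n|$. The guiding idea is to split each (off-diagonal) matrix element of $\Eo(i)$ among the outcomes $j$ of $\Fo$ with weights $\sqrt{p^\Fo_n(j)\,p^\Fo_m(j)}$, normalised by the identity $1-d^2_{nm}(\Fo)=\sum_j\sqrt{p^\Fo_n(j)\,p^\Fo_m(j)}$. Concretely I would set, in the fixed basis,
\[
\langle n|\mathsf G(i,j)|m\rangle \;:=\; M_{nm}(i)\,\sqrt{p^\Fo_n(j)\,p^\Fo_m(j)}\;=\;\frac{\langle n|\Eo(i)|m\rangle\,\sqrt{p^\Fo_n(j)\,p^\Fo_m(j)}}{1-d^2_{nm}(\Fo)},
\]
which is well defined: the denominator equals $1$ when $n=m$ (since $d^2_{nn}(\Fo)=1-\sum_j p^\Fo_n(j)=0$) and is nonzero for $n\neq m$ by the standing hypothesis $d_{nm}(\Fo)\neq 1$; entries with $p^\Fo_n(j)=0$ simply vanish.

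First I would verify the two marginal conditions. Summing over $j$ and using $\sum_j\sqrt{p^\Fo_n(j)\,p^\Fo_m(j)}=1-d^2_{nm}(\Fo)$ gives $\sum_j\langle n|\mathsf G(i,j)|m\rangle=\langle n|\Eo(i)|m\rangle$, i.e.\ $\sum_j\mathsf G(i,j)=\Eo(i)$. Summing over $i$ and using $\sum_i\Eo(i)=\id$ gives $\sum_i M_{nm}(i)=\delta_{nm}$, hence $\sum_i\langle n|\mathsf G(i,j)|m\rangle=\delta_{nm}\,p^\Fo_n(j)$, i.e.\ $\sum_i\mathsf G(i,j)=\Fo(j)$. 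In particular $\sum_{i,j}\mathsf G(i,j)=\id$, so $\mathsf G$ has the correct margins and normalisation, and it remains only to check positivity of each $\mathsf G(i,j)$.

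For positivity I would observe that $\mathsf G(i,j)$ is precisely the Schur (entrywise) product $M(i)\star\big(v_j v_j^{*}\big)$, where $v_j\in\hi$ has components $(v_j)_n=\sqrt{p^\Fo_n(j)}\ge 0$. The matrix $v_j v_j^{*}$ is rank-one and positive semidefinite, while $M(i)$ is Hermitian (because $\Eo(i)^{*}=\Eo(i)$ and $d_{nm}(\Fo)=d_{mn}(\Fo)$) and positive semidefinite by hypothesis; the Schur product theorem then yields $\mathsf G(i,j)=M(i)\star\big(v_j v_j^{*}\big)\ge 0$. Hence $\mathsf G$ is a genuine joint observable and $(\Eo,\Fo)$ are jointly measurable.

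The construction is essentially forced by the very definition of the Hellinger distance, so I do not expect a serious obstacle; the two points requiring care are that the denominators $1-d^2_{nm}(\Fo)$ never vanish --- which is exactly the stated assumption --- and the passage from ``$M(i)\ge 0$'' to ``$\mathsf G(i,j)\ge 0$'', which is the conceptual core of the argument and rests on the Schur product theorem.
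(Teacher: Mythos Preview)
Your argument is correct. The paper itself does not give a proof of this proposition; it merely cites \cite[Prop.\ 2]{Kiukas2022}. Your construction --- defining $\mathsf G(i,j)=M(i)\star(v_jv_j^*)$ with $(v_j)_n=\sqrt{p^\Fo_n(j)}$, checking the marginals via $\sum_j\sqrt{p^\Fo_n(j)p^\Fo_m(j)}=1-d^2_{nm}(\Fo)$ and $\sum_i\langle n|\Eo(i)|m\rangle=\delta_{nm}$, and invoking the Schur product theorem for positivity --- is precisely the natural route, and is in fact the argument used in the cited reference.
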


This also reflects the same complementarity principle, in that the above matrix will typically be positive semidefinite when the off-diagonal is small compared to the diagonal, that is, coherences $\<n|\Eo(i)|m\>$ are small relative to $1-d^2_{nm}(\Fo)$ and the diagonal of $\Eo$ is not too small.

We now proceed to the dynamical setting, given by the Schur channels as noted in Section \ref{pure1}. The relevant joint measurability conditions therefore involve channels of the form $\Lambda(A) = C\star A$ where $C$ is a positive semidefinite matrix with unit diagonal. In a dynamical setting $C$ will depend on time, and we can use the conditions to estimate the critical time at which incompatibility is lost. The point is that initially diagonal observable $\Fo$ will remain invariant under the channel, and we can use the conditions above for $\Eo$ replaced by $\Lambda(\Eo)$ to assess the effect of the dynamics. In this context the following result is known:

\begin{proposition}{\cite[Theorem 2]{Kiukas2022}}\label{SDP}
Let $\Lambda$ be a Schur channel and $\Fo$ a $\mathcal C$-valued (i.e.\ diagonal) observable. Then the pair $(\Lambda(\Eo), \Fo)$ is jointly measurable for \emph{all} observables $\Eo$, if and only if there are positive semidefinite matrices $C^{(j)}$ such that
 $$
 \sum_j C^{(j)}=C, \qquad c_{nn}^{(j)}\equiv p_n^\Fo(j).
 $$
\end{proposition}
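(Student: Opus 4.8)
The plan is to prove both implications directly by exhibiting explicit joint observables, so that no abstract appeal to SDP duality is needed.

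\emph{Sufficiency.} Suppose positive semidefinite matrices $C^{(j)}$ with $\sum_j C^{(j)}=C$ and $c^{(j)}_{nn}=p^\Fo_n(j)$ for all $n$ are given. For an arbitrary observable $\Eo$ with outcome set $\Omega_\Eo$, I would set $\mathsf G(i,j):=C^{(j)}\star\Eo(i)$ for $i\in\Omega_\Eo$. By the Schur product theorem each $\mathsf G(i,j)$ is positive semidefinite; since $\Fo(j)=\sum_n p^\Fo_n(j)|n\rangle\langle n|$ is diagonal one gets $\sum_i\mathsf G(i,j)=C^{(j)}\star\id=\mathrm{diag}(c^{(j)}_{nn})=\Fo(j)$; and $\sum_j\mathsf G(i,j)=\big(\sum_j C^{(j)}\big)\star\Eo(i)=C\star\Eo(i)=\Lambda(\Eo)(i)$. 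Hence $\mathsf G$ is a joint observable for $(\Lambda(\Eo),\Fo)$, and the \emph{same} $C^{(j)}$ work for every $\Eo$. This is the easy direction and only uses linearity and positivity of the Schur product.

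\emph{Necessity.} The idea is to test joint measurability against a single maximally coherent observable and exploit the boost covariance of the Schur channel. Write $d=|I|$, $\omega=e^{2\pi i/d}$, and let $\{|c_k\rangle\}_{k\in\Z_d}$ be the Fourier-conjugate basis, $|c_k\rangle=d^{-1/2}\sum_n\omega^{kn}|n\rangle$, which is mutually unbiased to $\{|n\rangle\}$; take $\Eo$ to be the sharp observable $\Eo(k)=|c_k\rangle\langle c_k|$. Let $V=\sum_n\omega^n|n\rangle\langle n|$ be the diagonal boost unitary, so that $V^k|c_0\rangle=|c_k\rangle$. Because $V^k$ is diagonal it commutes with taking the Schur product by $C$, i.e.\ $V^k(C\star A)V^{-k}=C\star(V^kAV^{-k})$; together with $|c_0\rangle\langle c_0|=\tfrac1d J$, where $J$ is the all-ones matrix and $C\star J=C$, this gives $\Lambda(\Eo)(k)=C\star|c_k\rangle\langle c_k|=\tfrac1d\,V^kCV^{-k}$.

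By hypothesis there is a joint observable $\mathsf G(k,j)$ for the pair $(\Lambda(\Eo),\Fo)$: $\mathsf G(k,j)\geq 0$, $\sum_j\mathsf G(k,j)=\tfrac1d V^kCV^{-k}$, and $\sum_k\mathsf G(k,j)=\Fo(j)$. I would then define $C^{(j)}:=\sum_{k\in\Z_d}V^{-k}\,\mathsf G(k,j)\,V^{k}$. Each $C^{(j)}$ is positive semidefinite as a sum of conjugates of positive operators; summing over $j$ yields $\sum_j C^{(j)}=\sum_k V^{-k}\big(\tfrac1d V^kCV^{-k}\big)V^k=\sum_k\tfrac1d C=C$; and since conjugation by the diagonal unitary $V^k$ leaves the main diagonal unchanged, $\mathrm{diag}(C^{(j)})=\sum_k\mathrm{diag}(\mathsf G(k,j))=\mathrm{diag}(\Fo(j))=(p^\Fo_n(j))_n$, i.e.\ $c^{(j)}_{nn}=p^\Fo_n(j)$ for all $n$. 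This is exactly the claimed decomposition of $C$, completing the proof.

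\emph{Main obstacle.} Sufficiency and the bookkeeping in necessity are routine; the one genuine idea is the choice of test observable. The statement quantifies over \emph{all} $\Eo$, so a priori one must control an infinite family of joint observables, and it is not obvious how to synthesise a single decomposition of $C$ from them. The point is that one maximally coherent observable adapted to the basis $\{|n\rangle\}$ already suffices, provided it carries a diagonal symmetry group: conjugation by the boosts $V^k$ both permutes the outcomes of $\Eo$ and commutes with the Schur channel, which lets one average the joint observable back into a fixed matrix while keeping positivity and the diagonal constraint. Recognising this covariance (the Fourier/Weyl choice makes it manifest, but any such maximally coherent observable works) and checking its compatibility with the Schur-product structure is where the content lies; everything else reduces to the Schur product theorem and linearity.
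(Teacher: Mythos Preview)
The paper does not actually prove this proposition; it is stated with a citation to \cite{Kiukas2022} and no argument is given in the text. So there is no in-paper proof to compare against.

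That said, your proof is correct and complete. The sufficiency direction is the natural construction $\mathsf G(i,j)=C^{(j)}\star\Eo(i)$, and your verifications (Schur product theorem for positivity, $C^{(j)}\star\id=\mathrm{diag}(C^{(j)})=\Fo(j)$, and $\sum_j C^{(j)}=C$ for the other margin) are all sound. For necessity, testing against the Fourier basis $\Eo(k)=|c_k\rangle\langle c_k|$ and averaging the resulting joint observable via the diagonal boosts $V^k$ is exactly the right idea: the key identities $\Lambda(\Eo)(k)=\tfrac1d V^kCV^{-k}$ and the invariance of diagonals under conjugation by $V^k$ make the construction $C^{(j)}=\sum_k V^{-k}\mathsf G(k,j)V^k$ work, and your checks of positivity, $\sum_j C^{(j)}=C$, and $c^{(j)}_{nn}=p^\Fo_n(j)$ are all correct. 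Your closing remark correctly identifies the only nontrivial step: recognising that a single maximally coherent observable with diagonal covariance suffices to extract the decomposition, rather than having to handle all $\Eo$ simultaneously.
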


This characterisation, which has the form of a semidefinite program (SDP), is numerically much quicker to evaluate than the generic joint measurability SDP. Furthermore, the above two propositions can be used to derive analytical bounds. This was illustrated in \cite{Kiukas2022} in the case of the spin-boson model; we summarise the main results here briefly.

The spin-boson model consists of an $N$--qubit quantum system coupled to an environment consisting of harmonic oscillators in the continuum limit \cite{breuer}; its Hamiltonian is 
$$H = H_S + \sum_k \omega_k b_k^* b_k + \sum_k S_z (g_kb_k^* +\overline g_k b_k),$$
where $b_k$ are the modes (so $[b_k,b_k^*]=1$), $g_k$ the coupling constants, and $H_S = \omega_0 S_z$ the system Hamiltonian with $S_z$ the total spin in the $z$--direction. Taking the initial state $\sigma$ of the environment to be thermal, the resulting subsystem dynamics \ref{red} takes the Schur channel form $\Lambda_t(A) = C[\lambda(t)]\star A$, where $c_{\bf nm}[\lambda]=\lambda^{(|{\bf n}|-|{\bf m}|)^2}$, given in the joint $\sigma_3$--eigenbasis $\{|{\bf m}\rangle\}$ for the qubits, with ${\bf m}=(m_1,\ldots,m_N)\in \{0,1\}^N$ and $|{\bf m}|=\sum_n m_n$. Here $\lambda(t)\in [0,1]$ is given by the decoherence functional of the environment, depending on the spectral density and temperature \cite{breuer}. It follows that the dynamics is a semigroup (only) when $\lambda(t)=e^{-\alpha t}$ for some constant $\alpha>0$, but the Markovian behaviour of joint measurability (Proposition \ref{JMSprop}) already holds when the dynamics is divisible, that is, $\lambda(t)$ is monotone decreasing in $t$.

In order to solve the incompatibility problem, we first eliminate the redundancy related to the decoherence-free subspaces $\mathcal D_k:={\rm span}\big\{|{\bf m}\rangle\,\big|\,|{\bf m}|=k\big\}$, $k=1,\ldots, N$. Clearly, $c_{\bf nm}[\lambda] =1$ when ${\bf n,\,m}$ belong to the same $\mathcal D_k$, and hence these coherences do not decay. Therefore, in particular, incompatible observables supported inside any of these subspaces never become compatible. Consequently, an interesting joint measurability problem of the form of Proposition \ref{SDP} in this setting involves incoherent observables $\Fo$ which do not distinguish between basis elements inside any $\mathcal D_k$, that is, $p^{\Fo}_{\bf n}= p^{\Fo}_{\bf m}$ whenever ${\bf n,\,m}\in \mathcal D_k$ for some $k$; that is, the restriction of $\Fo$ to $\mathcal D_k$ is a trivial observable. It can then be shown that the problem reduces to the consideration of decoherence matrices of the form
$$
C[\lambda] =
\begin{pmatrix} 1 & \lambda & \lambda^4 &\cdots & \cdots &\lambda^{N^2}\\
 \lambda &1 & \lambda & \ddots & & \vdots\\
 \lambda^4 & \lambda & \ddots &\ddots &\ddots & \vdots\\
 \vdots & \ddots & \ddots & \ddots & \lambda & \lambda^4\\
 \vdots & & \ddots & \lambda & 1 &\lambda\\
 \lambda^{N^2} & \cdots & \cdots & \lambda^4 &\lambda & 1\end{pmatrix}.
$$
This matrix has an extra symmetry, namely with respect to reflection about the counter-diagonal. It therefore makes sense to restrict to diagonal observables $\Fo$ having the corresponding \emph{covariance} property, that is, we take $\Fo$ to have outcome set $\{0,\ldots, N\}$, and $p^{\Fo}_{\bf n}(k)=p^{\Fo}_{\bf m}(N-k)$ when $|{\bf n}|=N-|{\bf m}|$. These can be interpreted as unsharp measurements of the label $k$ of the decoherence-free subspace; for further discussion on the role of symmetry in this context, see \cite{Kiukas2022}.

After the above reduction, the problem in Proposition \ref{SDP} can be solved analytically for the simplest case $N=2$. For the general case we restrict to covariant diagonal observables of the form
$$
\Fo_\alpha(k) = \alpha D_k + (1-\alpha)q_k\id,
$$
where $D_k$ is the projection onto the decoherence-free subspace $\mathcal D_k$, $q_k = 2^{-N}{\rm tr}[D_k] = 2^{-N} \binom{N}{k}$, and $\alpha\in [0,1]$. This is natural in the sense that the projections $D_k$ form the spectral resolution of the system Hamiltonian $H_S$; the map $\alpha\mapsto\Fo_\alpha$ then interpolates between the energy observable $H_S$ and its trivialisation, which is the classical coin toss distribution. The problem now has only two parameters, so that joint measurability is characterised by a region in the $(\lambda,\alpha)$-plane. In the case $N=2$ the boundary curve has the analytical form
$$
\alpha = 1-4\lambda^2\big[3+\lambda^4+2\sqrt 2(1-\lambda^2)\big]^{-1}.
$$
For arbitrary $N$, we can use Propositions \ref{SDP1} and \ref{SDP}  to find analytical bounds for the exact boundary curve. In fact, the Hellinger distances between the distributions $p^{\Fo_\alpha}_{\bf n}$ and $p^{\Fo_\alpha}_{\bf m}$ for $|{\bf n}\rangle \in \mathcal D_k$, $|{\bf m}\rangle\in \mathcal D_{k'}$ are given by
$$
d^2_{\bf n,m}(\Fo_\alpha):= \alpha - u_k(\alpha)-u_{k'}(\alpha)
$$
where $u_k(\alpha) =\sqrt{q_k(1-\alpha)(\alpha+q_k(1-\alpha))}-q_k(1-\alpha)$. In particular, taking $k=1,\,k'=0$ gives the inequality
$$
\lambda \leq 1-\alpha + u_0(\alpha)+u_1(\alpha)
$$
as a necessary condition for joint measurability. Furthermore, using the second proposition together with some theory of Toeplitz matrices, one can show \cite{Kiukas2022} that the inequality
$$
\theta_3\left(\frac{\pi}{2}, \lambda\right)\geq \alpha - 2u_0(\alpha),
$$
where $\theta_3(x,\lambda)=1+2\sum_{k=1}^\infty\lambda^{k^2}\cos(2kx)$ is the Jacobi Theta function, implies joint measurability of $(\Lambda_t(\Eo), \Fo)$ for all observables $\Eo$.

These results are obviously highly specific to the case at hand, but they demonstrate that the general methods listed above do produce analytical results for joint measurability problems even in large open quantum systems.

\end{document}